\def\RR{{\mathbb R}}
\def\CC{{\mathbb C}}
\def\NN{{\mathbb N}}
\def\ZZ{{\mathbb Z}}
\def\A{{\mathcal A}}
\def\cH{{\mathcal H}}
\def\M{{\mathcal M}}
\def\b{\beta}
\def\g{\gamma}
\def\k{\kappa}
\def\l{\lambda}
\def\cW{{\cal W}}
\def\L{\Lambda}
\def\dim{{\hbox{dim}\,}}
\def\1{{\mathbbm 1}}
\def\uone{{\rm U(1)}}
\def\diff{{\rm Diff}}
\def\diffs1{\diff(S^1)}
\def\mob{{\rm M\ddot{o}b}}
\def\supp{{\rm supp\,}}
\def\psl2r{{\rm PSL}(2,\RR)}
\def\sl2r{{\rm SL}(2,\RR)}
\def\su11{{\rm SU}(1,1)}
\def\2dmob{{\overline{\psl2r}\times\overline{\psl2r}}}
\def\<{\langle}
\def\>{\rangle}
\def\dom{{\mathrm{Dom}}}
\newtheorem{theorem}{Theorem}[section]
\newtheorem{corollary}[theorem]{Corollary}
\newtheorem{proposition}[theorem]{Proposition}
\newtheorem{lemma}[theorem]{Lemma}
\newtheorem{conjecture}[theorem]{Conjecture}
\theoremstyle{remark}
\newtheorem{remark}[theorem]{Remark}
\title{Local energy bounds and strong locality in chiral CFT}
\date{} 
\author{
{\bf Sebastiano Carpi},
{\bf Yoh Tanimoto}
\\
   Dipartimento di Matematica, Universit\`a di Roma Tor Vergata\\
   Via della Ricerca Scientifica 1, I-00133 Roma, Italy\\
email: {\tt carpi@mat.uniroma2.it}, {\tt hoyt@mat.uniroma2.it}\\
\\
{\bf Mih\'aly Weiner}\\
Budapest University of Technology and Economics (BME), \\
Department of Analysis, \\
  H-1111 Budapest  M\H{u}egyetem rkp. 3--9 Hungary, and\\
  MTA-BME Lend\"ulet Quantum Information Theory Research Group\\
email: {\tt mweiner@math.bme.hu}
}
\begin{document}
\maketitle

\begin{abstract}
A family of quantum fields is said to be strongly local if it generates a local net of von Neumann algebras.
There are few methods of showing \textit{directly} strong locality of a quantum field. Among them, linear energy bounds are the most widely used, yet a chiral conformal field of conformal weight $d>2$ cannot admit linear energy bounds.
In this paper we give a new direct method to prove strong locality in two-dimensional conformal field theory. We prove that if a chiral conformal field satisfies an energy bound of degree $d-1$, then it also satisfies a certain local version of the energy bound, and this in turn implies strong locality. A central role in our proof is played by diffeomorphism symmetry.
As a concrete application, we show that the vertex operator algebra given by a unitary vacuum representation of the $\cW_3$-algebra is strongly local. For central charge $c > 2$,  this yields a new conformal net.
We further prove that these nets do not satisfy strong additivity, and hence are not completely rational.
\end{abstract}

\section{Introduction}\label{intro}

Locality plays a fundamental role in quantum field theory.
Its precise formulation depends on setting, 
but it is usually expressed as some kind of
commutation  relation. In the Haag-Kastler framework \cite{Haag96}, we have commutativity of the bounded operators associated to spacelike separated regions. 
When the theory is formulated in terms of {\it quantum fields} (operator-valued distributions), one requires
commutativity between the operators obtained after smearing the fields with test functions having space-like separated supports (Wightman locality \cite{SW00}).
However, unlike in the previous case, these are -- in general -- unbounded operators whose domain is not the full Hilbert space, so what is in fact assumed is commutativity on some (fixed, suitable) domain. In the chiral conformal case, one often uses the setting of {\it vertex operator algebras} (VOAs) in which locality appears as a certain commutation relation between {\it formal series} \cite{Kac98} which is directly related to Wightman locality, see e.g.\! \cite[Section 1.1]{Kac98} and \cite[Appendix A]{CKLW18}.

Conceptually, these different forms of locality should all express the same physical principle. However, their mathematical equivalence is far from being evident. Nelson's famous example \cite[Section 10]{Nelson59} shows that, even if two self-adjoint operators $A,B$ commute on a common dense core, the (bounded) unitary operators $e^{iAt}$ and $e^{iBs}$ (where $t,s\in\RR$) need not commute. The commutativity of these 
bounded functions is a {\it stronger} property, which is hence called {\it strong commutativity}. The difference between 
commutativity and strong commutativity is a major obstacle for moving from (unbounded) quantum fields to Haag-Kastler nets of (bounded) algebras.

In prominent examples of constructive quantum field theory \cite{GJ87}, this problem has been treated by linear energy bounds: when a suitable positive self-adjoint operator (``Hamiltonian'') bounds the smeared fields and their commutators, then strong locality follows. However, there are important quantum fields which \textit{do not} satisfy linear energy bounds. 

In this paper we focus on chiral conformal quantum field theory.
Apart from physical reasons, their study is also well-motivated by mathematical ones. There is a rich variety of explicitly constructed chiral conformal models and they helped to uncover deep mathematical relations between various different areas of mathematics.
Here we consider the question of under what condition one can guarantee that two ``formally'' relatively local chiral conformal fields -- i.e.\! ones that are relatively local as formal  series -- remain local even at the level of generated bounded (von Neumann) algebras. In case they do so, we say that they are relatively {\it strongly local}. One of our main motivations is the study of connection of between VOAs and conformal nets which, from a general point of view, was initiated in \cite{CKLW18}, a work in which the notion of strong locality of fields plays a central role.

The passage from formal series (VOA setting) to algebras of bounded operators on a Hilbert space (i.e.\! to conformal Haag-Kastler nets, or conformal nets for short) requires three ingredients. Firstly, one needs the existence of a suitable inner product -- as it is usually put, the model must be {\it unitary}. This is far from being an easy problem; for example, for the Virasoro algebra, there is a parameter $c$ called the {\it central charge} and it was a great achievement \cite{GKO86, FQS86} to establish the discrete series and the continuous family of $c$ for which there are unitary representations. Secondly, even having a suitable inner product, one needs to have some control on the ``unboundedness'' of the Fourier modes of the fields so that smearing with 
smooth test functions
can be defined. This is usually done by establishing and using some {\it polynomial energy bounds}, see e.g.\! at \cite[Section 6]{CKLW18}. Contrary to the question of unitarity, this is often easy: we are unaware of any unitary model, where such bounds could not be derived by rather standard, straightforward arguments for the generating fields (which then implies polynomial bounds for all fields).
Lastly, one has to prove that the smeared fields strongly commute when the test functions have disjoint support.

In the context of unitary VOAs, solutions to the question of strong locality may be classified into two types.
There are some methods -- although rather few; most prominently the one relying on the use of linear energy bounds (which we will also review below in more detail) -- with which one can directly show the strong relative locality of two fields. These may be called {\it primary} methods. On the other hand, the strong relative locality between fields of a generating set implies 
the strong relative locality of any two fields of the VOA
\cite{CKLW18}.
This gives a {\it secondary} way of showing strong locality:  one can prove strong locality of a VOA by showing that it can be embedded in a larger VOA which is generated by fields whose strong locality can be proven by a \textit{primary method}. Other secondary methods concerns the VOA representation theory and give the strong locality for some unitary VOA extensions of certain strongly local VOAs, see e.g.\!
\cite[Section 5]{Gui20-1}.

A conformal field $\phi$ is said to satisfy a \textit{linear energy bound} if 
for every test function $f$, the product
\[
 \phi(f) (L_0 + \1)^{-1},
\]
(where $L_0$ is the conformal Hamiltonian) is bounded. 
Existence of such bounds allows an application of the Glimm-Jaffe-Nelson commutator theorem \cite{Nelson72, GJ72} showing the desired strong locality \cite{DF77}. As famous examples for fields satisfying such a bound, we mention the stress-energy field and
the free bosonic field (or as is also called: the $U(1)$-current). The free fermionic field satisfies an even better bound:
the smeared field is already bounded, without need of a  dumping by $(L_0+\1)^{-1}$, hence strong locality of the (even part) of the relevant VOA is immediate.  
Yet, there are interesting unitary VOAs that are not generated by fields satisfying linear energy bounds and have no known realizations as sub-VOAs of some unitary VOA whose strong locality is already established (by e.g.\! the use of linear energy bounds). In fact, in some sense, the existence of linear energy bounds is rather rare: as we shall show in Section \ref{sec:globaltolocal}, a (nonzero) primary field $\phi$ of conformal dimension $d>1$ 
can \textit{at best}  satisfy an energy bound of degree $d-1$; that is
\[
 \phi(f) (L_0 + I)^{-s}
\]
may be bounded for all test functions $f$ only
if $s\geq d-1$ (with the $s=d-1$ case hence referred to  as an \textit{optimal} energy bound). Thus, a primary field of conformal dimension $d>2$ cannot satisfy a linear energy bound. 

An optimal energy bound, even if it is not a linear one, can be of great value. One of our main results is the following: if $\phi$ is a primary field of conformal dimension $d>1$ and it admits an optimal energy bound, then in turn the following also holds: for any nonnegative test function $f\geq 0$, 
\[
\phi(f^{d-1}) (T(f) + (r(f)+\epsilon)I)^{-(d-1)}
\]
is a bounded operator, where $T(f) = \sum_n \hat f_n L_n$ is the smeared stress-energy tensor
(which is bounded from below by \cite{FH05})
and sufficiently large positive number $r(f)$.
This is a {\it local} version of the energy bounds: rather than using $L_0$ -- which is $T$ smeared with the constant $1$ function -- we have a bounding positive operator whose localization coincides with that of $\phi(f)$. 

The idea of local energy bounds has already circulated in the last few years and such bounds have been
shown for fields (currents) with $d=1$ \cite{CW-localenergy},
but so far no such bounds were derived for the $d > 2$ case. Local energy bounds are interesting by themselves, because they pass immediately to representations. Besides representation theory, local energy bounds were used in the construction of spectral triples in the 
superconformal net setting \cite{CHKL10, CHKLX15}.
Here we shall use them in the 
context of strong locality. In particular, we prove that if $\phi,\tilde{\phi}$ are relatively local primary fields of conformal dimensions $d,\tilde{d} > 1$ satisfying the above local energy bound, then they are also strongly relatively local. Thus, an optimal (global) energy bound implies a local one which in turn can be used to derive strong locality. Altogether, this gives a new primary method for showing strong locality.

As a concrete example of all these new techniques, we consider the $\mathcal W_3$-algebra. Recently we showed that its vacuum representations are unitary for every central charge value $c\geq 2$ \cite{CTW22}. This model has a certain realization in the tensor product of two $U(1)$-Heisenberg algebras, but for $c>2$ this realization does not respect the native Hamiltonian and inner product of the Heisenberg algebra and hence cannot be directly used to conclude strong locality. 
Having no further known realizations for the general $c>2$ case, this is a perfect place for an application of our new method. 

By rather involved computations (including the estimate of the normal square $:\!L^2\!:$ of the Virasoro field), we check that the $W$ field (of conformal weight $d=3$) satisfies an optimal global energy bound of degree $3-1 = 2$.
This implies that the $W$ field satisfies a local energy bound, and hence -- as this field together with the stress-energy tensor generate the full model -- strong locality of the $\mathcal W_{3,c}$ unitary VOA for all $c\ge 2$.
Besides the applications to strong locality of VOAs the results of this paper should also provide a new useful tool in the comparison between 
vertex operator algebras and conformal nets from the representation theory point of view which has been considerably developed in the last few years, see e.g.\! \cite{Gui20-2}, \cite{Tener19-2}.

This paper is organized as follows. In Section \ref{preliminaries} we recall fundamental concepts,
in Section \ref{sec:main} we present our main results: how an optimal global energy bound implies a local one in Section \ref{sec:globaltolocal}, and how these local energy bounds imply strong locality in Section \ref{sec:strong}.
In Section \ref{improving} we tools to obtain global energy bounds from bounds on commutators.
In Section \ref{sec:W} we study the particular case of the $\mathcal W_3$-algebra.
By the techniques developed above, we conclude that the $W$ field is strongly local to itself for $c\ge 2$. We conclude the paper by briefly studying some properties of the resulting conformal nets.

\section{Preliminaries}\label{preliminaries}

Here we shall review the notion of chiral conformal fields with special attention to {\it polynomial energy bounds} and {\it smearing}. In some sense the most appropriate setting for treating these concepts is that of a {\bf unitary vertex operator algebra} (unitary VOA), for which we refer to \cite{CKLW18}.
However, except for Section \ref{w3}, we consider only certain fields and {\it not} the whole unitary VOA. Thus we decided to summarize what we mean by conformal fields {\it without} reference to a unitary VOA. This makes our presentation not only more transparent but also applicable to different settings, including fields in a representation of the VOA. We expect that the techniques developed here might apply to intertwining fields with appropriate modifications.

\subsection{Primary and quasi-primary fields}\label{fields}
Let $V$ be a complex linear space equipped with a representation
$\{C,L_n: n\in \ZZ\}$ of the Virasoro algebra with central charge $C=c\1$
such that $L_0$ is diagonalizable on $V$ with lowest eigenvalue $h\geq 0$ and with all eigenvalues included in the set $\{h,h+1,h+2,\ldots\}$\footnote{This assumption is mainly for simplicity. In general, for applications of e.g.\! Proposition \ref{pr:globaltolocal} --
since the field operators always change the eigenvalues by an integer --
one could decompose the space $V$ into invariant components where this assumption does hold and then use the fact that the estimate of the cited proposition is independent of the value of $h$.}. That is, we have 
\begin{align}\label{eq:virasoro}
{[} L_n,L_m{]} &= (n-m) L_{n+m} +\frac{c}{12}(n^3-n)\delta_{n,-m}\1
\end{align}
for $n,m\in\ZZ$.
We further assume that $V=\bigoplus_{n=0}^\infty V_{h+n}$ (algebraic direct sum) where $V_{n+h}\equiv {\rm Ker}(L_0-(n+h)\1)$.

A family of linear operators $\{\phi_n\}_{n \in \ZZ}$ on $V$ are said to form the ({\bf{Fourier}}) {\bf modes of a field} on $V$ if for every $v\in V$, there is an $n_v$ such that $\phi_n v=0$ whenever $n\ge n_v$.
We shall further say that $\{\phi_n\}_{n\in\ZZ}$ are the modes of a {\bf primary field of conformal dimension $d$} if
\begin{align}
\label{primary_commrel}
[L_n,\phi_m] = ((d-1)n-m) \phi_{n+m}
\end{align}
for all $n,m\in\ZZ$.
If the above relation holds for all $m\in \ZZ$ and $n=-1,0,1$
(but not necessarily for $|n|>1$) then we say that our field is {\bf quasi-primary}. In either case, the 
(abstract) field associated to these modes is the formal series 
$\phi(z)=\sum_{n\in \ZZ}\phi_{(n)}z^{-n-1}$
where $\phi_{(n)} = \phi_{n-d+1}$. 

In what follows, we shall always assume that there is an inner product (a positive-definite sesquilinear form) $\<\cdot, \cdot \>$ making the given representation of the Virasoro algebra unitary, that is, 
\[
\langle \Psi, L_n \Phi\rangle = \langle L_{-n}\Psi, \Phi\rangle
\]
for all $\Phi, \Psi\in V$. We call it a unitary structure on $V$.
Note that the assumption of a unitary structure 
implies that in particular the central charge $c$ must have a positive value.

Given a unitary structure, we say that our primary (or quasi-primary) field $\phi(z)$ is {\bf hermitian} if it holds that $\<\phi_n\Psi, \Phi\> = \<\Psi, \phi_{-n}\Phi\>$
for all $\Psi, \Phi \in V$.
Note that the representation of the Virasoro algebra (given together with $V$) itself defines a hermitian quasi-primary field
of conformal dimension $2$: the formal series $L(z) = \sum_n L_{n-1} z^{-n-1} = \sum_n L_n z^{-n-2}$ is called the {\bf stress-energy} field.

\subsection{Energy bounds and smeared fields}\label{energybounds}

In the setting described in the previous Section, let us consider now a primary (or quasi-primary) field of conformal dimension $d$ with Fourier modes $\{\phi_n\}_{n \in \ZZ}$. We say that
our field satisfies {\bf polynomial energy bounds} if there exist some constants $C,r,s>0$ such that
\begin{equation}
\label{polyEB}
 \|\phi_n \Psi\| \leq C (1+|n|^r)\|(L_0+\1)^s \Psi\|
\end{equation}
for all $\Psi\in V$ and all $n\in \ZZ$, c.f.\! \cite[Section 6]{CKLW18}.
In this case, the restriction of the operator $\phi_n$ to $V_k$ is a bounded map from $V_k$ to $V_{k-n}$
and hence the (formal) adjoint $(\phi_n)^\dagger$ (as a linear operator on $V$) exists and it satisfies
\[
 \langle \Psi, \phi_n \Phi\rangle = \langle (\phi_n)^\dagger\Psi,\Phi\rangle 
\]
for all $\Psi,\Phi\in V$ and one has that $(\phi^\dagger)_n:=(\phi_{-n})^\dagger$ form
the Fourier modes of a (possibly new) primary (or quasi-primary) field of conformal dimension equal to that of $\phi$
(if our $\phi$ is hermitian, then $\phi^\dagger = \phi$).
Note that polynomial energy bounds of $\phi^\dagger$ follow from those of $\phi$:
for $\Psi_m \in V_m$, we have
\begin{align*}
 \|(\phi^\dagger)_n\Psi_m\| &= \sup_{\Psi_{m+n} \in V_{m+n}, \|\Psi_{m+n}\| = 1}\<\Psi_{m+n}, (\phi_{-n})^\dagger\Psi_m\>
 = \sup_{\Psi_{m+n} \in V_{m+n}, \|\Psi_{m+n}\| = 1}\<\phi_{-n}\Psi_{m+n}, \Psi_m\> \\
 &\le C(1+|n|)^r (m+n+1)^s\|\Psi_m\| \le C(1+|n|)^{r+s} (m+1)^s\|\Psi_m\| \\
 &= C(1+|n|)^{r+s} \|(L_0+\1)^s\Psi_m\|
\end{align*}
and the bounds for a general vector $\Psi = \sum_m \Psi_m$ follow because $(\phi^\dagger)_n\Psi_m$
are orthogonal with each other for different $m$.

An important consequence of polynomial energy bounds is that it makes possible
to ``smear'' the field. That is, for any smooth function $f:S^1\to\CC$ with Fourier coefficients
$\hat{f}_n := \frac1{2\pi}\int_{-\pi}^{\pi}f(e^{i\theta})e^{-in\theta}d\theta$,
the sum
\[
 \phi(f) := \sum_{n\in \ZZ} \hat{f}_n \phi_n
\]
is strongly convergent on $V$: $\sum_{n\in \ZZ} \hat{f}_n \phi_n \Psi$ is a vector of
the Hilbert space $\mathcal{H}$ obtained by the norm-completion of $V$. In this way we get a closable operator
(since its formal adjoint, $\sum_{n\in \ZZ} \overline{\hat{f}_n} (\phi_n)^\dagger$, is also well-defined on $V$ and hence densely defined in $\cH$)
whose closure --- for simplicity of notations --- we shall still denote by $\phi(f)$. We have that
\[
\phi(f)^*\supset \phi^\dagger(\overline{f})
\]
showing that in the particular case when our field is hermitian $f$ is real valued, $\phi(f)$ is \textit{symmetric}. Note that even in this case, \textit{self-adjointness} -- i.e. that  $\phi(f)^*= \phi^\dagger(\overline{f})$ -- does not seem to 
follow in general. On the other hand, $\phi(f)$ always preserves
the set of {\bf smooth vectors} $C^\infty(L_0) := \bigcap_{n\in \NN}\dom(L_0^n)$, where by a slight abuse of notation, we denoted $L_0$ and its closure -- whose domain also contains vectors outside $V$ -- by the same symbol. Actually, more than just the invariance of the domain $C^\infty(L_0)$, we have that the map
\begin{align}
\label{eq:distrcont}
C^\infty(S^1, \CC)\ni f \mapsto \phi(f)\Psi \in C^\infty(L_0)
\end{align}
is a strongly continuous linear map for every $\Psi\in C^\infty(L_0)$. 
For details and proofs on these statements (and also on other facts in this Section) see e.g.\! 
\cite[Section 6]{CKLW18}.

The stress-energy automatically satisfies the following {\it linear} energy bound:
\begin{align*}
\|L_n \Psi\|\leq \sqrt{1+\frac{c}{12}}(1+|n|^{\frac{3}{2}})\|(\1+L_0)\Psi\|
\end{align*} 
for every $\Psi\in V$ (see \cite[(3.23)]{WeinerThesis}).
Although -- following VOA notations --  we denoted the stress-energy field by $L(z)$, we shall use $T(f)$
for the corresponding smeared (and closed) operators. 

\subsection{Diffeomorphism covariance}

It is well-known that, as a consequence of the linear energy bound, hermitianity of $L(z)$ implies $T(f)^*=T(\overline{f})$
\cite[Proposition 2]{Nelson72}, as $[L_0, T(f)] = iT(f')$ satisfies a linear energy bound as well;
in particular, when $f$ is a real function, $T(f)$ is self-adjoint.
Moreover, the thus obtained self-adjoint operators always generate a positive energy projective
representation of the group $\diff_+(S^1)$ of orientation-preserving diffeomorphisms. Again, mainly to fix notations and conventions, let us quickly recall what does this exactly mean. 

For a differentiable map $\gamma:S^1\to S^1$ we will consider its derivative to be a real-valued
function; in our convention $\partial \gamma(e^{i\theta}) = -i\frac{d}{d\theta} \log(\gamma(e^{i\theta}))$ and in this way $\partial\gamma$ is a strictly positive function for every $\gamma\in\diff_+(S^1)$. For every smooth function $f:S^1\to\RR$ there exists a unique differentiable one-parameter subgroup $t\mapsto \gamma^f_t \in \diff_+(S^1)$ such that 
\begin{align*}
\frac{d}{dt}\gamma^f_t (z)|_{t=0} = i z f(z)
\end{align*}
for every $z\in S^1\equiv\{w\in \CC:\, |w|=1\}$; this is precisely the one-parameter subgroup usually referred as the {\it exponential} when one thinks of $f$ as a vector field.
With this convention, the rotations are generated by the constant function $1$.

With these notations one has that there is a unique strongly continuous projective unitary representation $U$ of $\diff_+(S^1)$ such that $e^{iT(f)t} = U(\gamma^f_t)$ (in the projective sense, i.e.\! up to a scalar) for every $f:S^1\to\RR$ smooth function and $t\in \RR$, and the following properties hold:
\begin{enumerate}[{(}a{)}]
\item\label{li:fieldcovariance} $U(\gamma)T(f)U(\gamma)^* = T((\partial \gamma\circ \gamma^{-1})(f\circ \gamma^{-1})) + r \1$
with the constant $r$ depending on the central charge $c$, the diffeomorphism $\gamma$ 
and the function $f$ \cite[Proposition 3.1]{FH05}\footnote{The constant $r$ is given in an explicit manner in \cite[(3.27)]{FH05}:
$r = -\frac c{24}\int_{S^1}\{\gamma, z\}f(z)dz$, where $\{\gamma, z\}$ is the Schwarz derivative.}; 
\item for every $f\geq 0$, the self-adjoint operator $T(f)$ is bounded from below \cite[Theorem 4.1]{FH05};

\item\label{li:continuityT} $C^\infty(S^1,\RR)\ni f\mapsto T(f)$ is continuous in the strong resolvent sense,
and in particular, $C^\infty(S^1,\RR^+_0)\ni f\mapsto {\rm min}({\rm Sp}(T(f)))\in \RR$ 
is continuous \cite[Proposition 4.5]{CW05}.
\item\label{li:boundT} for $f_1,\cdots ,f_n \in C^\infty(S^1,\CC)$, there is $t>0$
such that\footnote{This can be
proved from the linear enegy bound and induction: The $n=1$ case is exactly the linear energy bound. Then,
$\|T(f_1)\cdots T(f_n) \Psi\| \le t\|(L_0+\1)T(f_2)\cdots T(f_n) \Psi\|$
and $(L_0+\1)T(f_2)\cdots T(f_n) \Psi = \sum_{j=2}^n T(f_2)\cdots T(f_j')\cdots T(f_n)\Psi +
T(f_2)\cdots T(f_n) (L_0+\1)\Psi$, and we can apply the assumption of induction to the last expression, as $\Psi \in C^\infty(L_0)$.
For the second claim, we take $n > s$, then we have $(T(f)+q\1)^{2n} \le t(L_0+\1)^{2n}$ for some $t>0$,
and as the function $f(x) = x^{\frac sn}$ is an operator monotone \cite[Theorem 4.1]{Simon19Loewners},
we have $(T(f)+q\1)^{2s} \le (t(L_0+\1))^{2s}$ by \cite[Theorem 2.9]{Simon19Loewners}.}
$\|T(f_1)\cdots T(f_n) \Psi\| \le t\|(\1+ L_0)^n \Psi \|$ for every $\Psi\in C^\infty(L_0)$.
Furthermore, if $f \in C^\infty(S^1, \CC)$ and $q \in \RR$ is such that $T(f) + q\1$ is positive and if $s > 0$,
then there is $t$ such that $\|(T(f)+q\1)^s\Psi\| \le t\|(L_0+\1)^s\Psi\|$ for all .
\end{enumerate}  

Property (\ref{li:fieldcovariance}) essentially says that, up to a multiplicative constant, $T$ transforms under diffeomorphisms like vector fields do.

If $\phi(z)$ is a primary field of conformal dimension $d$ satisfying polynomial energy bounds, the commutator formula (\ref{primary_commrel}) results in the following:
\begin{align}\label{eq:covariance}
U(\gamma) \phi(f) U(\gamma)^* = \phi((\partial \gamma\circ \gamma^{-1})^{d-1}(f\circ \gamma^{-1}))
\end{align}
for every $f\in C^{\infty}(S^1,\CC)$ and $\gamma\in\diff_+(S^1)$; see e.g.\!\! \cite[Proposition 6.4]{CKLW18} for details. If $\phi$ is only quasi-primary, then the above transformation rule does not necessary hold for all $\gamma\in \diff_+(S^1)$ but it is still true for elements of the subgroup
$\mob \subset \diff_+(S^1)$ consisting of transformations of the form
\begin{align*}
S^1 = \RR\cup\{\infty\}\ni z \mapsto \frac{az+b}{cz+d} ,\;\;\;\; \left(\begin{matrix}
a & b \\ c & d
\end{matrix}\right)\in {\rm SL}(2,\RR).
\end{align*} 
Note that the $2d-1$ dimensional subspace of $C^\infty(S^1,\CC)$ consisting of linear combinations 
of the $2d-1$ functions $z\mapsto z^k$ $(k\in\ZZ, |k|<d)$ is invariant under the action
\begin{align*}
\mob\ni \gamma:  f\mapsto (\partial\gamma\circ\gamma^{-1})^{d-1}(f\circ\gamma^{-1}),
\end{align*}
and that this is actually a $2d-1$ dimensional irreducible representation of $\mob$. In particular, if the conformal dimension of our quasi-primary field is $d$, then
for any $k\in\ZZ, |k|<d$
\begin{align*}
{\rm Span}\{ U(\gamma)\overline{\phi_k} U(\gamma)^*|_{V} :\; \gamma\in \mob \}
= {\rm Span}\{\phi_j: \, j\in \ZZ, |j|< d\},
\end{align*}
where $\overline{\phi_k}$ is the closure of $\phi_k$
(we need the closure, because $U(\gamma)^*$ does not preserve $V$).

\subsection{Strong locality}
The von Neumann algebra generated by a collection of bounded operators is the smallest von Neumann algebra containing the collection.
Here we need to consider the von Neumann algebras generated by some (possibly) unbounded operators in an appropriate sense.

A densely defined closed operator $X$ on a Hilbert space $\mathcal H$ has a unique {\it polar decomposition}: that is, there exists a unique partial isometry $U$ with ${\rm Ker}(U)= {\rm Ker}(X)$ such that $X=UA$, where $A$ is the positive self-adjoint operator $A = \sqrt{X^*X}$,
which further admits the spectral decomposition; 
the thus obtained projections and the partial isometry $U$ can be viewed as the bounded building blocks of $X$.
One says that $X$ is \textbf{affiliated} to a von Neumann algebra $\mathcal M$ when $\M$ contains $U$ and all of
the spectral projections of $A$. Of course, when $\|X\|<\infty$, $X$ is affiliated to $\M$ if
and only if $X\in \M$. We define the von Neumann algebra $W^*(\{T_\alpha\})$ generated by a collection $\{T_\alpha\}$
of densely defined closed operators as the smallest von Neumann algebra to which all $T_\alpha$'s are affiliated.  

Let now $V$ be a complex inner product space carrying a unitary representation of the Virasoro algebra 
 $\{L_n: n\in\mathbb Z\}$ of the type explained in Section \ref{fields}, $\mathcal H =\overline{V}$ the Hilbert space completion of $V$ and $\phi$ a polynomially energy bounded field on $V$. For each nonempty, nondense, open connected interval of the circle $I\subset S^1$, consider the von Neumann algebra
\[
\A_\phi(I)\equiv W^*(\{\phi(f) | f\in C^\infty(S^1,\CC), {\rm Supp}(f)\subset I \}).
\]
We say that two fields $\phi,\tilde{\phi}$ are \textbf{relatively local} (to each other)
if there is $N \in \NN$ such that $(z-w)^N[\phi(z), \tilde \phi(w)] =0$ as formal series.
By \cite[Appendix A]{CKLW18}, for fields satisfying polynomial energy bounds,
this is equivalent to the vanishing of the commutator $[\phi(f),\tilde{\phi}(\tilde{f})]$ on the
set of smooth vectors $C^\infty(L_0)$ for every pair of smooth functions $f,\tilde{f}\in C^\infty(S^1,\CC)$
with disjoint support; i.e.\! the ``weak'' (or algebraic) relative locality of the two fields.
Furthermore, two fields $\phi,\tilde{\phi}$ are said to be \textbf{strongly relatively local} (to each other)
if $\A_\phi(I)$ and $\A_{\tilde{\phi}}(\tilde{I})$ are commuting von Neumann algebras whenever $I\cap \tilde{I}=\emptyset$.
We say that a family of fields $\{\phi^{(j)}\}$ is \textbf{local} (respectively \textbf{strongly local}) if any pair of these fields (including pairs of the same field) is relatively local (respectively strongly relatively local). If a family of fields is strongly local then it is also local, see e.g.\! \cite[Proposition 2.1]{CKLW18} but the converse is not known to hold in general and our purpose is to give new sufficient conditions on local families in order to get strong locality. 

By an adaptation of \cite[Lemma 6.5]{CKLW18} (we do not need vertex operator $Y$, but only
the rotation covariance of $\phi$), one has that a bounded operator $B'$ is in the commutant of 
of $\A_\phi(I)$ if and only if 
$$
\langle B'\Psi,\phi(f)\Phi \rangle =
\langle \phi^\dagger(\overline{f})\Psi,B'^*\Phi \rangle
$$
for all $\Psi,\Phi\in V$ and $f\in C^\infty(S^1,\CC)$ with support in $I$. Taking also into account that $B'$ is an element of the commutant of a certain von Neumann algebra if and only if $B'^*$ is so, the above characterization implies that
$$
\A_\phi(I) = \A_{\phi^\dagger}(I) =
\A_{\phi+\phi^\dagger}(I) \bigvee \A_{i(\phi -\phi^\dagger)}(I).
$$
This means that when checking strong relative locality, it is enough to work with hermitian fields:
$\phi$ and $\tilde{\phi}$ are strongly relatively local if and only if any of the hermitian fields 
$\phi+\phi^\dagger$ and $i(\phi-\phi^\dagger)$ is
strongly relatively local to any of hermitian fields
$\tilde{\phi}+\tilde{\phi}^\dagger$ and $i(\tilde{\phi}-\tilde{\phi}^\dagger)$.

    \section{Local energy bounds and strong locality}
    \label{sec:main}
    Let $V$ and fields be as in Section \ref{fields} and $\{L_n\}$ be the associated representation of the Virasoro algebra
    (but we do not necessarily assume the existence of vacuum, unless otherwise specified).

    \subsection{From global to local energy bounds}
    \label{sec:globaltolocal}
    Here we show that an energy bound regarding a single Fourier 
    component of a primary field of the \textit{optimal degree} can be turned into 
    a local energy bound regarding smeared fields.
    
    \begin{proposition}\label{pr:component}
    Let $\phi$ be a hermitian primary field of conformal dimension $d>1$ and $s > 0$.
    The following properties are equivalent:
    \begin{enumerate}[{(}1{)}]
    \item\label{li:boundzero} there exist $k\in \ZZ$ and a constant $C>0$ such that
    \begin{align}\label{eq:bound0}
    \|\phi_k \Psi\| \leq C \|(L_0+\1)^s \Psi\|
    \end{align}
    for every $\Psi\in V$,
    \item\label{li:polynomial} $\phi$ satisfies polynomial energy bounds and 
    $\phi(f)(L_0+\1)^{-s}$ is a bounded operator for every $f\in C^\infty(S^1, \CC)$.
    \end{enumerate}
    \end{proposition}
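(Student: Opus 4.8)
The plan is to prove the two implications separately, with the direction $(\ref{li:polynomial})\Rightarrow(\ref{li:boundzero})$ being essentially immediate and $(\ref{li:boundzero})\Rightarrow(\ref{li:polynomial})$ being the substantive one, where the $\mob$-covariance of the $2d-1$ low modes $\{\phi_j : |j| < d\}$ does the real work.

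For $(\ref{li:polynomial})\Rightarrow(\ref{li:boundzero})$: pick any $k \in \ZZ$ with $|k| < d$ and choose a test function $f$ with $\hat f_k \neq 0$. Then $\phi_k = \hat f_k^{-1}\big(\phi(f) - \sum_{n \neq k}\hat f_n \phi_n\big)$; but it is cleaner to argue directly. Actually the quickest route: from (\ref{li:polynomial}), $\phi(f)(L_0+\1)^{-s}$ is bounded for a single well-chosen $f$, say $f(z) = z^k$ so that $\phi(z^k) = \phi_k$ on $V$ (careful with the Fourier-coefficient normalization $\hat f_n = \delta_{n,-k}$ for $f(e^{i\theta}) = e^{-ik\theta}$, so $\phi(z^{-k}) = \phi_k$); then $\|\phi_k\Psi\| = \|\phi(z^{-k})(L_0+\1)^{-s}(L_0+\1)^s\Psi\| \le \|\phi(z^{-k})(L_0+\1)^{-s}\|\cdot\|(L_0+\1)^s\Psi\|$, giving (\ref{eq:bound0}) with $C = \|\phi(z^{-k})(L_0+\1)^{-s}\|$. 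So this direction is a one-line computation once normalizations are fixed.

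For $(\ref{li:boundzero})\Rightarrow(\ref{li:polynomial})$: the first task is to upgrade the bound \eqref{eq:bound0} on a single mode $\phi_k$ to a bound on \emph{every} mode $\phi_n$, which will in particular yield the polynomial energy bounds, and then sum over $n$ to handle $\phi(f)$. The key tool is that $\phi_n = U(\gamma)\overline{\phi_k}U(\gamma)^*|_V$ can be written, for $|n| < d$, as a (fixed, $n$-independent up to bounded coefficients) finite linear combination of conjugates of $\phi_k$ by Möbius transformations — this is the displayed span identity at the end of the diffeomorphism-covariance subsection, together with the fact that this is an irreducible $(2d-1)$-dimensional representation of $\mob$, so a single vector $\phi_k$ with $|k|<d$ generates the whole span. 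Each $U(\gamma)$ with $\gamma \in \mob$ satisfies $U(\gamma)(L_0+\1)^s U(\gamma)^* \le \mathrm{const}\cdot(L_0+\1)^s$ (a standard estimate: $\mob$ is a finite-dimensional group, $(L_0+\1)^s$ is comparable to $(T(g)+q\1)^s$ for appropriate positive $T(g)$, and property (\ref{li:boundT}) plus the covariance (\ref{li:fieldcovariance}) of $T$ controls the conjugate; alternatively use that $L_0 + \1 \le \mathrm{const}(L_{-1} + L_0 + L_1 + \mathrm{const}\cdot\1)$-type sandwich inequalities on $V$). Combining, one gets $\|\phi_n\Psi\| \le C'\|(L_0+\1)^s\Psi\|$ for all $|n|<d$ with a uniform constant. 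For $|n| \ge d$ one uses the commutation relation \eqref{primary_commrel}: writing $\phi_{n+1}$ in terms of $[L_n,\phi_1]$ or $[L_1,\phi_n]$, i.e. $((d-1)\cdot 1 - n)\phi_{1+n} = [L_1,\phi_n]$, so $\phi_{m} = \frac{1}{d-1-m+1}[L_1,\phi_{m-1}]$ lets one climb upward from $\phi_{d-1}$, and similarly $[L_{-1},\phi_n]$ lets one climb downward from $\phi_{-(d-1)}$; each step costs a factor of $L_{\pm 1}$, which satisfies a linear energy bound, and a polynomially-growing scalar coefficient, producing exactly a polynomial energy bound $\|\phi_n\Psi\| \le C(1+|n|^r)\|(L_0+\1)^s\Psi\|$ after $|n| - (d-1)$ steps (tracking that $L_{\pm1}$ maps $V_k \to V_{k\mp1}$ shifts the argument of $L_0+\1$ by a bounded amount, absorbed into the constant as in the $\phi^\dagger$ computation already in the text). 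Finally, for $\phi(f) = \sum_n \hat f_n \phi_n$: split $\sum_n \hat f_n \phi_n (L_0+\1)^{-s}$, use the polynomial bound $\|\phi_n(L_0+\1)^{-s}\| \le C(1+|n|^r)$ just obtained, and the rapid decay of $\hat f_n$ for smooth $f$, to conclude the series of bounded operators converges in operator norm; hence $\phi(f)(L_0+\1)^{-s}$ is bounded.

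The main obstacle I anticipate is the uniform estimate $U(\gamma)(L_0+\1)^s U(\gamma)^* \le \mathrm{const}\cdot(L_0+\1)^s$ for $\gamma$ ranging over $\mob$ (or a generating neighborhood thereof) with a constant independent of $\gamma$ — this is where one must be careful, since $L_0$ is only \emph{positive}, not bounded, and $U(\gamma)$ does not commute with it. The cleanest fix is probably to note it suffices to bound $U(\gamma)\phi_k U(\gamma)^*$ applied to $\Psi \in V$ directly: write $U(\gamma)\phi_k U(\gamma)^*\Psi = U(\gamma)\phi_k U(\gamma)^*\Psi$ and estimate $\|\phi_k U(\gamma)^*\Psi\| \le C\|(L_0+\1)^s U(\gamma)^*\Psi\|$ by hypothesis \eqref{eq:bound0}, so the only thing needed is $\|(L_0+\1)^s U(\gamma)^*\Psi\| \le \mathrm{const}_\gamma\|(L_0+\1)^s\Psi\|$, i.e. that $U(\gamma)$ maps $\dom((L_0+\1)^s)$ to itself boundedly — and for a \emph{single} $\gamma$ this follows from property (\ref{li:boundT}) (take $g$ with $\gamma^g_1 = \gamma$, so $(L_0+\1)^s U(\gamma)^* = (L_0+\1)^s U(\gamma^g_{-1}) = (L_0+\1)^s e^{-iT(g)}$ is controlled since $(L_0+\1)^s \le \mathrm{const}(T(g')+q\1)^s \cdots$, or more simply since $U(\gamma)^*(L_0+\1)^{-s}U(\gamma) = (U(\gamma)^*(L_0+\1)U(\gamma))^{-s}$ and $U(\gamma)^*(L_0+\1)U(\gamma)$ is, up to an additive constant, $T$ smeared with the strictly positive function $\partial\gamma\circ\gamma^{-1}$, comparable to $L_0+\1$ by (\ref{li:boundT})). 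Since the span identity only requires \emph{finitely many} fixed $\gamma$'s to express each $\phi_n$ with $|n|<d$, no uniformity over the whole group is actually needed, which disposes of the obstacle.
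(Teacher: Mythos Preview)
Your proof of $(\ref{li:polynomial})\Rightarrow(\ref{li:boundzero})$ is fine and matches the paper's. For the reverse direction, your M\"obius argument for the low modes $|n|<d$ is correct and does preserve the degree $s$ (finitely many fixed $\gamma_j$, each contributing a bounded factor via property (\ref{li:boundT})). The gap is in the climbing step for $|n|\ge d$.

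You claim that iterating $\phi_m = \tfrac{1}{d-m}[L_1,\phi_{m-1}]$ produces $\|\phi_n\Psi\|\le C(1+|n|^r)\|(L_0+\1)^s\Psi\|$ with the \emph{same} exponent $s$. It does not. Each application of the commutator with $L_{\pm 1}$ raises the degree by one: from $\|\phi_{m-1}(L_0+\1)^{-s}\|<\infty$ and the linear bound on $L_1$ you can only deduce $\|\phi_m(L_0+\1)^{-(s+1)}\|<\infty$, because the term $L_1\phi_{m-1}(L_0+\1)^{-s}$ requires inserting an extra $(L_0+\1)^{-1}$ to tame $L_1$. After $|n|-(d-1)$ steps you obtain a bound of degree $s+|n|-(d-1)$, which grows linearly in $|n|$. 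This is enough for polynomial energy bounds (the first clause of (\ref{li:polynomial})), but gives no control on $\|\phi_n(L_0+\1)^{-s}\|$, so the final summation $\sum_n \hat f_n\,\phi_n(L_0+\1)^{-s}$ cannot be concluded to converge in operator norm. The rapid decay of $\hat f_n$ does not help, since you have not shown the individual summands are even bounded.

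The paper avoids this entirely: it never tries to bound $\phi_n(L_0+\1)^{-s}$ mode-by-mode for large $|n|$. Instead, it exploits full $\diff_+(S^1)$-covariance (not just $\mob$) of the primary field. For any strictly positive $g$, one can solve an ODE to produce a single diffeomorphism $\gamma$ with $\partial\gamma\circ\gamma^{-1}=N_g g$, so that $U(\gamma)\phi_0 U(\gamma)^*=N_g^{d-1}\phi(g^{d-1})$. Thus a degree-$s$ bound on $\phi_0$ transfers directly to a degree-$s$ bound on $\phi(f)$ for every strictly positive $f$, via a single conjugation (controlled by property (\ref{li:boundT})). Since any smooth complex $f$ is a linear combination of four strictly positive functions, this finishes the proof. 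The key point your argument misses is that one must go from a single mode to $\phi(f)$ \emph{in one step} rather than through infinitely many intermediate modes.
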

    \begin{proof}
    
    (\ref{li:polynomial}) implies (\ref{li:boundzero}) since $\phi(\mathfrak e_k) = \phi_k$
    (we denote the closure with the same symbol), where $\mathfrak e_k(z) = z^k$.
  
    Let us assume (\ref{li:boundzero}).
    Then $\phi$ satisfies a certain polynomial energy bound: Indeed,
    if $k\neq 0$ we have
    $\phi_0 = -\frac1{kd}[L_{-k}, \phi_k]$ (on $V$) and we obtain a bound on $\phi_0$.
    If $k=0$, then \eqref{eq:bound0} is directly a bound on $\phi_0$.
    We have
    $\phi_n = \frac{1}{(d-1)n}[L_n,\phi_0]$ for $n\in \ZZ$
    and hence the field $\phi$ satisfies polynomial energy bounds (because $L_n$ satisfies a linear energy bound
    \cite[below Lemma 4.1]{CW05})
    and one can consider the smeared field $\phi(f)$ for every $f \in C^\infty(S^1, C)$.

    We prove the claimed energy bound in four steps:
    first for a certain $f > 0$, next for the constant function $1$, and
    arbitrary $f > 0$ and finally for $f \in C^\infty(S^1, \CC)$.
    
    If $k=0$, the first and second steps are done. Let us consider other cases.
    (\ref{li:boundzero}) implies
    $\phi_{-k}(L_0+(k+1)\1)^{-s}$ is bounded.
    Indeed, if $k>0$,
    \[
     \phi_{-k}(L_0+(k+1)\1)^{-s} = ((L_0+(k+1)\1)^{-s}\phi_k)^\dagger = (\phi_k(L_0+\1)^{-s})^\dagger
    \]
    and the right-hand side is bounded by \eqref{eq:bound0}.
    Therefore, we have
    $\|\phi_{-k} \Psi\| \leq C\|(L_0+\1)^s\Psi\|$ with a possibly different $C$
    because $(L_0+(k+1)\1)^{s}(L_0+\1)^{-s}$ is bounded.
    If $k<0$,
    \[
     \phi_{-k}(L_0+\1)^{-s} = ((L_0+\1)^{-s}\phi_k)^\dagger = (\phi_k(L_0+(1-k)\1)^{-s})^\dagger
    \]
    and the right-hand side is bounded by \eqref{eq:bound0} and a similar argument as above.
    
    Furthermore,
    $\|\phi(\mathfrak c_k) \Psi\| \leq C \|(L_0+\1)^s \Psi\|$ with again a possibly different $C$,
    where $\mathfrak c_k(e^{i\theta}) = \cos k\theta$.
    As $\phi$ is primary, we have
    $U(\gamma) \phi(f) U(\gamma)^* = \phi((\partial \gamma\circ \gamma^{-1})^{d-1}(f\circ \gamma^{-1}))$
    and we can choose $\g$ such that $(\partial \gamma\circ \gamma^{-1})^{d-1}(\mathfrak c_k\circ \gamma^{-1})$
    has small negative part (by choosing $\g$ which ``shrinks'' the part where $\mathfrak c_k$ is negative
    and hence $\partial \g$ is small there).
    Then by further applying rotations, we can find a finite set $\g_1,\cdots,\g_\ell$ such that
    $\sum_{j=1}^\ell U(\g_j) \phi(\mathfrak c_k) U(\g_j)^* = \phi(f)$ and this $f$ is \emph{strictly positive}.
    For this $f$,
    \begin{align*}
    \|\phi(f)\Psi\| &= \left \|\sum_{j=1}^\ell U(\g_j) \phi(\mathfrak c_k) U(\g_j)^*\Psi \right\| \le \sum_{j=1}^\ell \|\phi(\mathfrak c_k) U(\gamma_j)^*\Psi\| \\
    &\le \sum_{j=1}^\ell C \|(L_0+\1)^s U(\g_j)^*\Psi\| = \sum_{j=1}^\ell C\|(U(\g_j)(L(1) + \1) U(\g_j)^*)^s \Psi\| \\
    &= \sum_{j=1}^\ell C \|(T(\partial \g_j\circ\g_j^{-1}) +q_j \1)^s \Psi\|,
    \end{align*}
    where $q_j \in \RR$,
    and the last expression can be bounded by $C_f \|(L_0 + \1)^s \Psi\|$
    with some $C_f > 0$ by property \eqref{li:boundT}.
    
    With $f$ in the previous paragraph, let us write $f=g^{d-1}$ where $g$ is again a \emph{strictly positive} smooth function on the circle.
    Then the solution of the ordinary differential equations
    \[
     \sigma'(s) = N_g g(e^{i\sigma(s)}), \qquad N_g = \frac1{2\pi}\int_{0}^{2\pi} \frac{1}{g(e^{i\theta})} d\theta,
    \]
    with the initial condition $\sigma(0)=0$
    can be given by $\sigma(s) = \tau^{-1}(s)$ (the inverse function of $\tau$), where
    $\tau(t) = \int_0^t \frac1{N_g g(e^{i\theta'})}d\theta'$, which is monotonically increasing with derivative $\tau'(\theta) = \frac1{N_g g(e^{i\theta})} > 0$
    and $\tau(t+2\pi) = \tau(t) + 2\pi$. Therefore $\sigma(\theta+ 2\pi) = \sigma(\theta) + 2\pi$ as well,
    and we can define a diffeomorphism by $\gamma(e^{i\theta}) = e^{i\sigma(\theta)}$
    (the constant $N_g$ is needed here to assure that $\gamma(e^{i(\theta + 2\pi)}) = e^{i\sigma(\theta+2\pi)} = e^{i\sigma(\theta)} = \gamma(e^{i\theta})$)
    satisfying
    \[
     \partial \gamma \circ \gamma^{-1}(e^{i\theta}) = N_g g(e^{i\theta}),\;\; \gamma(1)=1.
    \]
    Then
    \[
     U(\gamma)\phi_0 U(\gamma)^* = 
     U(\gamma)\phi(1) U(\gamma)^* =\phi((N_g g)^{d-1}) = N^{d-1}_g\phi(f) 
    \]
    and in a similar manner $U(\gamma)^*L_0 U(\gamma) = 
    U(\gamma)^*T(1) U(\gamma) = T(\tilde f) + q_\g\1$ for some $\tilde f > 0, q_\g \in \RR$, showing that
    \begin{align*}
    \|\phi_0\Psi\| &= N_g^{d-1}\|U(\gamma)^* \phi(f) U(\gamma)\Psi\|
     = N_g^{d-1}\|\phi(f) U(\gamma)\Psi\| \\
    &\leq C_f N_g^{d-1} \|(L_0 + \1)^s U(\gamma)\Psi\| \\
    &\le \tilde C \|(L_0 + (q_\g + 1)\1)^s \Psi\|
    \end{align*}
    and by property (\ref{li:boundT}) with a further possibly different $C$,
    we have $\|\phi_0\Psi\| \le C \|(L_0 + \1)^s \Psi\|$.
    
    Next, let $f$ an arbitrary smooth strictly positive function.
    We set $g = f^{\frac1{d-1}}$ and take $\gamma$ in the same way as above. Then we have 
    $U(\gamma)\phi_0 U(\gamma)^* = N^{d-1}_g\phi(f)$, $U(\gamma)L_0 U(\gamma)^* = T(\tilde f) + q_\g$
    for some $\tilde f, q_\g$ (different from those in the previous paragraph) and
    \begin{align*}
    \|\phi(f)\Psi\| &= N_g^{-(d-1)}\|U(\gamma)\phi_0 U(\gamma)^*\Psi\| \\
    &\leq C N_g^{-(d-1)} \|(L_0 + \1)^s U(\gamma)^*\Psi\|
    = C N_g^{d-1} \|(T(\tilde f) + (q_\g + 1)\1)^s \Psi\| \\
    &\le C_f \|(L_0 + \1)^s \Psi\|
    \end{align*}
    for some $C_f > 0$, again by property (\ref{li:boundT}).

    In other words, $\|\phi(f)(L_0 + \1)^{-s}\|<\infty$ whenever $f$ is strictly positive.
    This concludes the proof, because
    every $f:S^1\to \CC$ is a linear combination of four strictly positive smooth functions.
    \end{proof}
    
    We shall now see that a nontrivial (i.e.\! non constant zero) primary field of conformal dimension
    $d>1$ can satisfy a bound of the form
    $\|\phi_0 \Psi\| \leq C \|(L_0 + \1)^s \Psi\|$
    \emph{at best} with degree $s \ge d-1$.
    Moreover, in case the degree is \textit{precisely} $d-1$ (hence it is \textbf{optimal}),
    such a global energy bound (formulated in terms of $L_0$) can be promoted to \textit{local energy bound},
    where the field smeared on a local test function is bounded by some polynomial of the stress-energy tensor
    smeared on a (in general different) local test function.
    \begin{proposition}\label{pr:globaltolocal}
    Let $\phi$ be a hermitian primary field of conformal dimension $d>1$ and $s,C$ positive constants such that
    $\|\phi_0 \Psi\| \leq C \|(L_0 + \1)^s \Psi\|$ for every $\Psi\in V$. If $s<d-1$ then $\phi=0$, and if 
    $s=d-1$, then for every $g\geq 0$ smooth function
    and every $\Psi\in C^\infty(L_0)$,
    \begin{align}\label{eq:leb}
    \|\phi(g^{d-1}) \Psi \| \leq \,C\,  \|(T(g)+r_g \1)^{d-1}\Psi\|
    \end{align}
    where the constant $r_g = -{\rm min}({\rm Sp}(T(g)))$ is universal in the sense that
    it depends only on $g$ and the central charge $c$ (but not on the lowest energy level $h$).
    \end{proposition}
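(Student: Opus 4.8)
The plan is to obtain both statements from a single device: transport the hypothesis on the zero mode $\phi_0$ around the circle by a diffeomorphism, and then let that diffeomorphism degenerate. First I would note that, by Proposition~\ref{pr:component}, the bound $\|\phi_0\Psi\|\le C\|(L_0+\1)^s\Psi\|$ already forces $\phi$ to be polynomially energy bounded, so $\phi(f)$ and the covariance rule \eqref{eq:covariance} are available. Given a strictly positive smooth $g$ on $S^1$, set $N_g=\frac1{2\pi}\int_0^{2\pi}\frac{d\theta}{g(e^{i\theta})}$ and let $\gamma_g\in\diff_+(S^1)$ be a diffeomorphism with $\partial\gamma_g\circ\gamma_g^{-1}=N_g\,g$ (its existence, with the normalization $N_g$, is exactly as in the proof of Proposition~\ref{pr:component}). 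Then \eqref{eq:covariance} gives $U(\gamma_g)\phi_0U(\gamma_g)^{*}=N_g^{d-1}\phi(g^{d-1})$, while property~(\ref{li:fieldcovariance}), applied to $L_0=T(1)$, gives $U(\gamma_g)L_0U(\gamma_g)^{*}=N_g\,T(g)+r'_g\1$, with $r'_g$ the constant of that property, which by its footnote depends only on $\gamma_g$ and on $c$, not on $h$. Feeding the vector $U(\gamma_g)^{*}\Psi$ into the hypothesis and conjugating back yields, for every $\Psi\in C^\infty(L_0)$,
\begin{equation}\label{eq:sketch-conj}
\|\phi(g^{d-1})\Psi\|\ \le\ C\,N_g^{\,s-(d-1)}\,\|(T(g)+q_g\1)^{s}\Psi\|,\qquad q_g:=\frac{r'_g+1}{N_g},
\end{equation}
and since $T(g)+q_g\1=\tfrac1{N_g}\,U(\gamma_g)(L_0+\1)U(\gamma_g)^{*}$ one reads off $q_g=-\min(\mathrm{Sp}(T(g)))+\tfrac{h+1}{N_g}$.

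For part (2), where $s=d-1$, the prefactor in \eqref{eq:sketch-conj} is $1$. To reach a general $g\ge0$ I would approximate by the strictly positive $g_\epsilon:=g+\epsilon$ and let $\epsilon\searrow0$. Then $g_\epsilon^{d-1}\to g^{d-1}$ in $C^\infty(S^1)$, so $\phi(g_\epsilon^{d-1})\Psi\to\phi(g^{d-1})\Psi$ by \eqref{eq:distrcont}; $T(g_\epsilon)=T(g)+\epsilon L_0$; and, because a nonnegative smooth $g$ with a zero vanishes there to even order $\ge2$, $1/g$ is non-integrable and $N_{g_\epsilon}\to\infty$, whence $q_{g_\epsilon}=-\min(\mathrm{Sp}(T(g_\epsilon)))+\tfrac{h+1}{N_{g_\epsilon}}\to-\min(\mathrm{Sp}(T(g)))=:r_g$ by the continuity in property~(\ref{li:continuityT}). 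Since $d\in\ZZ$, only integer powers occur: expanding $(T(g_\epsilon)+q_{g_\epsilon}\1)^{d-1}-(T(g)+r_g\1)^{d-1}$ as a telescoping sum, every term carries a factor $\epsilon L_0+(q_{g_\epsilon}-r_g)\1$ which tends to $0$ in the $C^\infty(L_0)$ topology while the remaining factors are dominated uniformly in $\epsilon$ by property~(\ref{li:boundT}); hence $(T(g_\epsilon)+q_{g_\epsilon}\1)^{d-1}\Psi\to(T(g)+r_g\1)^{d-1}\Psi$. Passing to the limit in \eqref{eq:sketch-conj} gives \eqref{eq:leb}. The universality of the constant is then automatic: $r_g=\lim_{\epsilon\searrow0}q_{g_\epsilon}$, and each $q_{g_\epsilon}=\tfrac{r'_{g_\epsilon}+1}{N_{g_\epsilon}}$ depends only on $g_\epsilon$ and on $c$.

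For part (1), where $s<d-1$, I would run the same degeneration $g_\epsilon=g+\epsilon$ for an arbitrary nonnegative $g$ with a zero. Now the right side of \eqref{eq:sketch-conj} is $C\,N_{g_\epsilon}^{\,s-(d-1)}\|(T(g_\epsilon)+q_{g_\epsilon}\1)^{s}\Psi\|$; the norm factor converges to $\|(T(g)+r_g\1)^{s}\Psi\|<\infty$ (by properties~(\ref{li:continuityT}) and~(\ref{li:boundT}) as above), whereas $N_{g_\epsilon}^{\,s-(d-1)}\to0$ because $N_{g_\epsilon}\to\infty$ and $s-(d-1)<0$. Therefore $\phi(g^{d-1})\Psi=0$ for every $\Psi\in C^\infty(L_0)$ and every nonnegative smooth $g$ with a zero. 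Finally, a smooth $v:S^1\to\CC$ supported in a proper interval $I$ can be written, after adding a large multiple of a fixed nonnegative bump supported in $I$ and extracting a $(d-1)$-th root (which is again smooth, since the functions involved vanish to infinite order at the boundary of their supports), as a finite $\CC$-linear combination of functions $g^{d-1}$ with $g\ge0$ smooth and supported in $I$; hence $\phi(v)\Psi=0$. A partition of unity subordinate to a cover of $S^1$ by proper intervals then gives $\phi(v)=0$ for all $v$, i.e.\ $\phi=0$.

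The main obstacle is the control of the limit $\epsilon\searrow0$ in \eqref{eq:sketch-conj}: one must establish that $\|(T(g_\epsilon)+q_{g_\epsilon}\1)^{d-1}\Psi\|$ (respectively the $s$-th power) stays bounded and converges on the fixed vector $\Psi$, which is where the Fewster--Hollands lower bound, the continuity of $f\mapsto\min(\mathrm{Sp}(T(f)))$, and the domination estimates (\ref{li:fieldcovariance})--(\ref{li:boundT}) for the smeared stress--energy all enter, together with the integrality of $d$ and the invariance of $C^\infty(L_0)$ under the $U(\gamma)$'s. In part (2) this is precisely what makes the clean, $h$-independent constant $r_g=-\min(\mathrm{Sp}(T(g)))$ emerge; in part (1) it is what guarantees the right side does not blow up faster than $N_{g_\epsilon}^{\,d-1}$. (For strictly positive $g$ one only obtains directly the weaker form of \eqref{eq:sketch-conj} with $q_g>r_g$; the sharpening to $-\min(\mathrm{Sp}(T(g)))$ and the independence of $h$ are genuine consequences of the degeneration and thus bound up with $g$ having a zero, which is the case relevant to localization.)
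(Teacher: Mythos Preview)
Your proposal is correct and follows essentially the same route as the paper: conjugate the $\phi_0$ bound by the diffeomorphism $\gamma_g$ with $\partial\gamma_g\circ\gamma_g^{-1}=N_g g$ to obtain the intermediate inequality \eqref{eq:sketch-conj} for strictly positive $g$, then approximate a nonnegative $g$ by $g+\epsilon$ and use $N_{g+\epsilon}\to\infty$ together with properties (\ref{li:continuityT})--(\ref{li:boundT}) to pass to the limit. You supply somewhat more detail than the paper in two places --- the telescoping argument for convergence of $(T(g_\epsilon)+q_{g_\epsilon}\1)^{d-1}\Psi$ when $s=d-1\in\ZZ$, and the explicit decomposition of a compactly supported test function as a combination of $(d-1)$-th powers in the $s<d-1$ case (which the paper defers to the proof of Lemma~\ref{lm:localfield}) --- and you correctly observe that the $h$-independence of $r_g$ is already encoded in the $h$-independence of each $q_{g_\epsilon}=(r'_{g_\epsilon}+1)/N_{g_\epsilon}$, which the paper instead attributes directly to \cite{FH05}.
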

    \begin{proof}
    By Proposition \ref{pr:component}, $\phi$ satisfies polynomial energy bounds.
    Furthermore, if $\{g_n\}$ are strictly positive smooth functions,
    then there are $\g_n \in \diff_+(S^1)$ such that $\g_n'\circ \g_n = N_{g_n}g_n, \g_n(1)=1$
    and hence $U(\g_n)\phi_0 U(\g_n)^* = N_{g_n}^{d-1}\phi(g_n^{d-1})$.
    By property \eqref{li:fieldcovariance}, it also holds that
    $U(\g_n)L_0 U(\g_n)^* = T(N_{g_n}g_n) + q_n\1$,
    where
    $q_n = h - {\rm min}({\rm Sp}(T(N_{g_n} g_n))) =: h + N_{g_n}r_{g_n}$.
    
    By a straightforwards estimate (as in the proof of Proposition \ref{pr:component}),
    we have
    \begin{align}\label{eq:quasilocal}
    \|\phi(g_n^{d-1})\Psi\| \leq C N_{g_n}^{s-(d-1)} \left\|\left(T(g_n)+\left(r_{g_n}+\frac{h+1}{N_{g_n}}\right)\1\right)^s \Psi\right\|
    \end{align}
    for every $\Psi\in C^\infty(L_0)$.
    
    For a positive function $g$ supported in a proper interval, we can 
    take a sequence of strictly positive functions $g_n$ $(n\in \NN)$ converging to $g$
    in $C^\infty(S^1)$ by adding constants converging to $0$.
    Since our $g$ is ``local'' (its support is not the whole circle), $N_{g_n} \to \infty$
    and by the properties (\ref{li:boundT}) and (\ref{li:continuityT})
    $\|(T(g_n)+(r_{g_n}+\frac{h+1}{N_g})\1)^s \Psi\|\to \|(T(g)+r_{g}\1)^s \Psi\|$ and 
    $\|\phi(g_n^{d-1})\Psi\|  \to \|\phi(g^{d-1})\Psi\|$ by (\ref{eq:distrcont}).
    
    If $s<d-1$, then by comparing the sides of \eqref{eq:quasilocal}, $\|\phi(g^{d-1})\Psi\|=0$.
    However, $\Psi\in C^\infty(L_0)$ was arbitrary and 
    local, nonnegative, smooth functions linearly span the set $C^\infty(S^1, \CC)$ so in this case
    our field is trivial: $\phi=0$.
    
    On the other hand, if $s=d-1$ then the limit results in the desired inequality.
    The only thing to comment about is the fact that by \cite[Theorem 4.1]{FH05},
    when $g$ is local, the minimum of the spectrum of $T(g)$ is independent of $h$. 
    \end{proof}
    
    We say that a primary field $\phi$ satisfies a \textbf{global energy bound} if
    there is $C, s > 0$ such that $\|\phi_0\Psi\| \le C \|(L_0 + \1)^s \Psi\|$.
    A global energy bound is said to be \textbf{optimal} if $s = d-1$.
    We say $\phi$ satisfies a \textbf{local energy bound} if
    for any $f \ge 0$ smooth function, there is $C, s, q > 0$ such that
    \begin{align}\label{eq:lebg}
     \|\phi(f^{d-1})\Psi\| \le C \|(T(f) + q)^s \Psi\|.
    \end{align}
    Proposition \ref{pr:globaltolocal} can be summarized as follows:
    if $\phi$ satisfies an optimal global energy bound, then it satisfies a local energy bound
    with $s=d-1$.

    \subsection{Improving global energy bounds}\label{improving}
    
    \textit{A priori}, it is not always easy to have the optimal global energy bounds $s=d-1$ for a given primary field $\phi$.
    The next Proposition gives a way to obtain a global energy bound from a bound on a commutator.
    
    \begin{proposition}\label{pr:improving}
     Let $\phi$ be a hermitian (quasi-)primary field and assume that
     \[
      \|[\phi_{-k}, \phi_k](L_0 + \1)^{-\b}\| < C
     \]
     for some $k\neq 0, \b > 0, C > 0$. Then, with $C'' =  (1+|k|)^{\frac{\beta+1}2}\sqrt{\frac{C}{|k|(\b + 1)}} > 0$,
     $\|\phi_k(L_0 + \1)^{-\frac{\b+1}2}\| < C''$.
    \end{proposition}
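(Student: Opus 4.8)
The plan is to estimate $\phi_k$ eigenspace by eigenspace. Set $M_m:=\bigl\|\phi_k|_{V_{h+m}}\bigr\|$ for $m\ge 0$, the operator norm of the restriction of $\phi_k$ to $V_{h+m}=\ker(L_0-(h+m)\1)$, and take $k\ge 1$ (the formula for $C'$ presupposes $k>0$). Since the relation $[L_0,\phi_k]=-k\phi_k$ --- which holds already for quasi-primary fields --- shows that $\phi_k$ maps $V_{h+m}$ into $V_{h+m-k}$, the subspaces $\phi_kV_{h+m}$ are mutually orthogonal for distinct $m$, so for any $s>0$ one has $\bigl\|\phi_k(L_0+\1)^{-s}\bigr\|=\sup_{m\ge0}(h+m+1)^{-s}M_m$. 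With $s=\tfrac{\b+1}{2}$ it therefore suffices to prove $M_m^2\le \dfrac{C_0}{k(\b+1)}\,(h+m+1)^{\b+1}$ for all $m$, where $C_0:=\bigl\|[\phi_{-k},\phi_k](L_0+\1)^{-\b}\bigr\|$; since $C_0<C$ this yields the strict bound $\bigl\|\phi_k(L_0+\1)^{-\frac{\b+1}{2}}\bigr\|\le\sqrt{C_0/(k(\b+1))}<C'$.

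The key step is the recursion $M_{m+k}^2\le M_m^2+C_0\,(h+m+1)^{\b}$. To prove it I would use hermiticity of $\phi$, i.e.\ $\langle\phi_n\Psi,\Phi\rangle=\langle\Psi,\phi_{-n}\Phi\rangle$ on $V$: applying this with $n=k$, $\Phi=\phi_kv$ and with $n=-k$, $\Phi=\phi_{-k}v$ gives, for $v\in V_{h+m}$, the identities $\langle v,\phi_{-k}\phi_kv\rangle=\|\phi_kv\|^2$ and $\langle v,\phi_k\phi_{-k}v\rangle=\|\phi_{-k}v\|^2$, hence
\[
\|\phi_{-k}v\|^2=\|\phi_kv\|^2-\langle v,[\phi_{-k},\phi_k]v\rangle\le \bigl(M_m^2+C_0(h+m+1)^{\b}\bigr)\|v\|^2,
\]
where $\|\phi_kv\|\le M_m\|v\|$ and $|\langle v,[\phi_{-k},\phi_k]v\rangle|\le C_0(h+m+1)^{\b}\|v\|^2$ because $v\in V_{h+m}$. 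Taking the supremum over unit $v\in V_{h+m}$ and using that $\sup_{\|v\|=1}\|\phi_{-k}v\|=\bigl\|\phi_{-k}|_{V_{h+m}}\bigr\|=M_{m+k}$ (again by hermiticity: the adjoint of $\phi_{-k}\colon V_{h+m}\to V_{h+m+k}$ is $\phi_k\colon V_{h+m+k}\to V_{h+m}$) produces the recursion.

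Next I would supply the base case and telescope. Since every $L_0$-eigenvalue lies in $\{h,h+1,\dots\}$, for $0\le m<k$ the vector $\phi_kv$ would have to be an eigenvector of eigenvalue $h+m-k<h$, so $\phi_kv=0$ and $M_m=0$. Writing a general $m$ as $m=qk+r$ with $0\le r<k$ and iterating the recursion downward from level $m$ to level $r$, I get $M_m^2\le M_r^2+C_0\sum_{i=1}^{q}(h+m-ik+1)^{\b}=C_0\sum_{i=1}^{q}(\lambda+1-ik)^{\b}$ with $\lambda:=h+m$. Because $\b>0$ the map $x\mapsto(\lambda+1-x)^{\b}$ is positive and decreasing on $[0,\lambda+1]$ and $\lambda+1-qk=h+r+1\ge1$, so a comparison with the integral gives $\sum_{i=1}^{q}(\lambda+1-ik)^{\b}\le\frac1k\int_0^{qk}(\lambda+1-x)^{\b}\,dx\le\frac{(\lambda+1)^{\b+1}}{k(\b+1)}$, whence $M_m^2\le\frac{C_0}{k(\b+1)}(\lambda+1)^{\b+1}$. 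Combined with the first paragraph this is exactly the required bound.

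The one point that needs care is getting the recursion to run in the useful direction: the commutator hypothesis must be exploited so that $M_{m+k}$ is controlled by $M_m$, i.e.\ by a \emph{lower} level, so that iterating reaches the vanishing base levels $0,\dots,k-1$; the opposite arrangement, bounding lower levels by higher ones, never terminates. The remaining ingredients --- the orthogonality reduction, the two hermiticity identities, and the elementary power-sum versus integral comparison that produces the precise constant $1/(k(\b+1))$ --- are routine.
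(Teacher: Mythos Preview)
Your proof is correct and follows essentially the same route as the paper: define the level-wise norms $M_m=\|\phi_k|_{V_{h+m}}\|$, use hermiticity to turn the commutator bound into the recursion $M_{m+k}^2\le M_m^2+C_0(h+m+1)^\beta$, invoke the vanishing base cases $M_0=\cdots=M_{k-1}=0$, and telescope. The paper states the resulting estimate $\gamma(n)^2\le \frac{C}{k(\beta+1)}(n+1)^{\beta+1}$ without proof (``it is easy to obtain''); your integral comparison is exactly the computation that justifies it, so you have merely made explicit what the paper leaves to the reader.
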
\label{pr:canonicalbound}
    \begin{proof}
     We first assume that $k > 0$.
     Put $\g(n) = \|\phi_k|_{\cH_n}\| < \infty$.
     In this proof, $\Psi_\ell$ denotes a generic vector in $V_\ell$.
     We have
     \begin{align*}
      \|\phi_k\Psi_{n+k}\|^2 &= \sup_{\|\Psi_n\| = 1}|\<\Psi_n, \phi_k\Psi_{n+k}\>|^2
      = \sup_{\|\Psi_n\| = 1} |\<\phi_{-k}\Psi_n, \Psi_{n+k}\>|^2 \\
      \|\phi_{-k}\Psi_n\|^2 &= \<\Psi_n, [\phi_k, \phi_{-k}]\Psi_n\> + \|\phi_k\Psi_n\|^2
     \end{align*}
     Therefore, by assumption it holds that
     \[
      \g(n+k)^2 \le \g(n)^2 + C(n+1)^\b.
     \]
     As we have $k>0$, it holds that $\g(0) = \g(1) = \cdots = \g(k-1) = 0$,
     it is easy to obtain the estimate
     \[
      \g(n)^2 \le \frac{C}{k(\b + 1)} (n+1)^{\b + 1},
     \]
     from which the desired estimate follows with $C' = \sqrt{\frac{C}{k(\b + 1)}}$.

     Next we consider the negative case.
     Note that the assumption $\|[\phi_{-k}, \phi_k](L_0 + \1)^{-\b}\| < C$
     remains invariant as $k$ is replaced by $-k$.
     Therefore, with $k>0$, we will show the desired inequality with $k$ replaced by $-k$.
     
     As for $\phi_{-k}$,
     \begin{align*}
       \|\phi_{-k}(L_0 + \1)^{-\frac{\b+1}2}\Psi_n\|^2
       &=\|\phi_{-k}(n + \1)^{-\frac{\b+1}2}\Psi_n\|^2 \\
       &\le (1 + k)^{\b+1}\|\phi_{-k}(n + \1 + k)^{-\frac{\b+1}2}\Psi_n\|^2 \\
       &= (1 + k)^{\b+1}\|(L_0 + \1)^{-\frac{\b+1}2}\phi_{-k}\Psi_n\|^2 \\
       &\le (1 + k)^{\b+1}\|(L_0 + \1)^{-\frac{\b+1}2}\phi_{-k}\|^2 \cdot\|\Psi_n\|^2 \\
       &\le (1 + k)^{\b+1}\|\phi_{k}(L_0 + \1)^{-\frac{\b+1}2}\|^2 \cdot\|\Psi_n\|^2 \\
       &\le (1 + k)^{\b+1}(C')^2 \|\Psi_n\|^2.
     \end{align*}
     As $\phi_{-k}(L_0 + \1)^{-\frac{\b+1}2}\Psi_n$ are mutually orthogonal for different $n$,
     for a general $\Psi$ we obtain $\|\phi_{-k}(L_0 + \1)^{-\frac{\b+1}2}\Psi\|^2 \le (1 + k)^{\b+1}(C')^2 \|\Psi\|^2$,
     which is what we had to prove
     with $C'' =  (1+k)^{\frac{\beta+1}2}\sqrt{\frac{C}{k(\b + 1)}}$.
    \end{proof}
    
    \subsection{Strong locality through local energy bounds}
    \label{sec:strong}
    
    Let $\phi$ be a primary field of dimension $d$. Then it holds that
    \begin{align}\label{eq:primary_smeared}
     [\phi(f), T(g)] = \phi((d-1)fg'-f'g).
    \end{align}
    In particular, if we take $g \ge 0, f = g^{d-1}$, then $[\phi(f), T(g)] = 0$.
    
    There are results which allow us to conclude strong commutativity from weak commutativity under certain circumstances.
    The first of which is due to Glimm and Jaffe (see \cite[Theorem 19.4.4]{GJ87}).
    For our purpose, we utilize the following special case of Driessler-Fr\"ohlich theorem \cite{DF77}
    (more precisely a variation of it \cite[Theorem C.2]{Tanimoto16-1} with vanishing commutators).
    \begin{theorem}\label{th:DF}
     Let $H\ge \1$ be a positive self-adjoint operator, $A$ and $B$ symmetric operators on a core $\mathscr{D}$ of $H$. Assume that
     there is positive number $C$ such that 
     \begin{enumerate}[(1)]
      \item\label{df1} $\|A\Psi\| \le C\|H\Psi\|$ and $\|B\Psi\| \le C\|H\Psi\|$ for all $\Psi \in \mathscr{D}$.
      \item\label{df2} For all $\Psi, \Phi \in \mathscr{D}$,
 \[
  \<H\Psi,A\Phi\> - \<A\Psi, H\Phi\> =  \<H\Psi,B\Phi\> - \<B\Psi, H\Phi\> = 0
 \]
      \item\label{df4} $\<A\Psi, B\Phi\> = \<B\Psi, A\Phi\>$ for all
      $\Psi, \Phi \in \mathscr{D}$.
     \end{enumerate}
     Then $A, B$ are essentially self-adjoint on any core of $H$ and they strongly commute.
    \end{theorem}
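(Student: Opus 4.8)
The plan is to obtain essential self-adjointness from the Glimm--Jaffe--Nelson commutator theorem and then to prove strong commutativity by differentiating a matrix element of $e^{-it\bar A}\bar B\,e^{it\bar A}$ in $t$ and using \eqref{df4} to force the derivative to vanish. Throughout, $\bar A,\bar B$ denote the self-adjoint closures of $A,B$.

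First I would apply the commutator theorem \cite{Nelson72,GJ72} with comparison operator $H$: by \eqref{df1} the operators $A,B$ are $H$-bounded, and by \eqref{df2} the commutator forms $(\Psi,\Phi)\mapsto\langle H\Psi,A\Phi\rangle-\langle A\Psi,H\Phi\rangle$ and its $B$-analogue vanish on $\mathscr D$ (hence are a fortiori bounded by the form of $H$), so $A$ and $B$ are essentially self-adjoint on every core of $H$. Then I would invoke the same circle of ideas --- this is exactly the technical content of \cite{DF77}, see also \cite[Theorem C.2]{Tanimoto16-1} --- to conclude that $e^{it\bar A}$ leaves $\mathscr D(H)$ invariant and that $t\mapsto e^{it\bar A}\Psi$ is continuous from $\RR$ into $\mathscr D(H)$ equipped with its graph norm; since in our situation the commutator is not merely $H$-bounded but identically zero, the differential inequality controlling this step degenerates and one in fact gets $\|He^{it\bar A}\Psi\|=\|H\Psi\|$, i.e.\ $\bar A$ strongly commutes with $H$ (and likewise $\bar B$ with $H$). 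I would also record here that, because of \eqref{df1}, the forms in \eqref{df2} and \eqref{df4} are continuous in the graph norm of $H$, so they extend from $\mathscr D$ to all of $\mathscr D(H)$: one has $\langle H\Psi,\bar A\Phi\rangle=\langle\bar A\Psi,H\Phi\rangle$ and $\langle\bar A\Psi,\bar B\Phi\rangle=\langle\bar B\Psi,\bar A\Phi\rangle$ for all $\Psi,\Phi\in\mathscr D(H)\subseteq\mathscr D(\bar A)\cap\mathscr D(\bar B)$.

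The main step would then be as follows. Fix $\Psi,\Phi\in\mathscr D(H)$ and set $g(t)=\langle e^{-it\bar A}\Phi,\ \bar B\,e^{-it\bar A}\Psi\rangle$; this is well defined because $e^{-it\bar A}$ preserves $\mathscr D(H)\subseteq\mathscr D(\bar B)$, and it is $C^1$ by the graph-continuity of the orbits together with $\Phi,\Psi\in\mathscr D(\bar A)$. Differentiating the left slot using the expression $g(t)=\langle e^{-is\bar A}\Phi,\bar B e^{-it\bar A}\Psi\rangle|_{s=t}$ and the right slot using the equivalent expression $g(t)=\langle\bar B e^{-is\bar A}\Phi,e^{-it\bar A}\Psi\rangle|_{s=t}$ (the two agree by self-adjointness of $\bar B$), one gets
\[
g'(t)=i\langle\bar A e^{-it\bar A}\Phi,\bar B e^{-it\bar A}\Psi\rangle-i\langle\bar B e^{-it\bar A}\Phi,\bar A e^{-it\bar A}\Psi\rangle=0,
\]
the vanishing being the extended \eqref{df4} applied to $e^{-it\bar A}\Phi,e^{-it\bar A}\Psi\in\mathscr D(H)$. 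Hence $g$ is constant, so $\langle\Phi,e^{it\bar A}\bar B e^{-it\bar A}\Psi\rangle=\langle\Phi,\bar B\Psi\rangle$ for all $\Phi\in\mathscr D(H)$; by density, $e^{it\bar A}\bar B e^{-it\bar A}\Psi=\bar B\Psi$ on $\mathscr D(H)$, which --- being invariant under $e^{it\bar A}$ and containing $\mathscr D$ --- is a core for both $\bar B$ and $e^{it\bar A}\bar B e^{-it\bar A}$, so these two self-adjoint operators coincide for every $t$. Consequently $e^{it\bar A}$ commutes with every bounded Borel function of $\bar B$, in particular with $e^{is\bar B}$, for all $s,t\in\RR$; that is, $\bar A$ and $\bar B$ strongly commute.

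The hard part is the second paragraph: the invariance $e^{it\bar A}\mathscr D(H)=\mathscr D(H)$ together with the graph-continuity of the orbit. The obstruction is the familiar bootstrap issue --- the naive computation of $\tfrac{d}{dt}\|He^{it\bar A}\Psi\|^2$ already presupposes the invariance --- which must be circumvented by regularizing $H$ (for instance by replacing it with the bounded operators $H(\1+\epsilon H)^{-1}$, or by truncating with its spectral projections) and then passing to the limit. This is precisely what is carried out in greater generality in \cite{DF77}; in our vanishing-commutator case all of the resulting estimates are trivial, so I would simply quote it through \cite[Theorem C.2]{Tanimoto16-1}. Everything after that --- the extension of the forms from $\mathscr D$ to $\mathscr D(H)$ and the differentiation argument --- is routine bookkeeping.
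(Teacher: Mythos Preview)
The paper does not prove this theorem: it is stated as a special case of the Driessler--Fr\"ohlich theorem, with a pointer to \cite{DF77} and to \cite[Theorem C.2]{Tanimoto16-1} for a variant with vanishing commutators, and no further argument is given. Your proposal is therefore not competing with a proof in the paper but rather unpacking the cited results, and it does so correctly: essential self-adjointness via the commutator theorem, invariance of $\mathscr D(H)$ under $e^{it\bar A}$ (which, as you say, is the only genuinely nontrivial step and is exactly what \cite{DF77}/\cite[Theorem C.2]{Tanimoto16-1} provide, degenerating here to strong commutation of $\bar A$ with $H$), extension of the forms \eqref{df2}--\eqref{df4} to $\mathscr D(H)$ by $H$-boundedness, and the differentiation of $t\mapsto\langle e^{-it\bar A}\Phi,\bar B e^{-it\bar A}\Psi\rangle$ to conclude $e^{it\bar A}\bar B e^{-it\bar A}=\bar B$. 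The two-variable/diagonal trick you use for the derivative is fine once you note that $t\mapsto e^{-it\bar A}\Psi$ is continuous in the $H$-graph norm (so that $\bar B e^{-it\bar A}\Psi$ is norm-continuous) and that $\mathscr D(H)$ is a common core for $\bar B$ and its unitary conjugate. In short, your write-up is a faithful expansion of precisely the references the paper invokes.
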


    Note that the condition (\ref{df2}) is requiring that the commutators $[A,H],[B,H]$ vanish in the weak sense.
    In the original theorems \cite[Theorem 19.4.4]{GJ87}\cite{DF77}, the most important assumption is that the commutators
    $[A,H], [B,H]$ are bounded by $H$ in a certain way.
    If we have a local energy bound \eqref{eq:lebg}, we can skip this condition
    by taking $H$ that \emph{commutes with $A,B$}
    and conclude strong commutativity of fields from weak commutativity.
    \begin{theorem}\label{th:strong}
    Let $\phi^{(1)}, \phi^{(2)}$ be relatively local hermitian primary fields with conformal dimensions $d_1, d_2 > 1$ and satisfy the local energy bound \eqref{eq:lebg}
    with $s_j = d_j - 1$.
    If $g_1$ and $g_2$ are nonnegative test functions with disjoint supports, then
    $\phi^{(1)}(g_1^{d_1-1})$ and $\phi^{(2)}(g_2^{d_2-1})$ are essentially self-adjoint on $C^\infty(L_0)$ and strongly commute.
    \end{theorem}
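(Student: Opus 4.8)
The plan is to verify the hypotheses of Theorem~\ref{th:DF} for the operators $A := \phi^{(1)}(g_1^{d_1-1})$ and $B := \phi^{(2)}(g_2^{d_2-1})$, taking $\mathscr{D} := C^\infty(L_0)$ as the common invariant domain and core and choosing a suitable positive self-adjoint $H$. Note first that each $\phi^{(j)}$ is polynomially energy bounded: applying \eqref{eq:lebg} with a strictly positive $f$ and property~(\ref{li:boundT}) gives $\|\phi^{(j)}(f^{d_j-1})\Psi\| \le C\|(L_0+\1)^{d_j-1}\Psi\|$, and then, exactly as in the proof of Proposition~\ref{pr:component}, a diffeomorphism turns this into a bound on $\phi^{(j)}_0$, whence polynomial energy bounds follow by Proposition~\ref{pr:component}. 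In particular all smeared fields are defined, preserve $\mathscr{D}$, and are symmetric on $\mathscr{D}$ (the $\phi^{(j)}$ are hermitian and the functions $g_j^{d_j-1}$ are real, since $g_j \ge 0$).

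Set $g := g_1 + g_2 \ge 0$. By \eqref{eq:primary_smeared}, on $\mathscr{D}$ we have $[A,T(g_1)] = \phi^{(1)}\big((d_1-1)g_1^{d_1-1}g_1' - (g_1^{d_1-1})'g_1\big) = \phi^{(1)}(0) = 0$, and also $[A,T(g_2)] = \phi^{(1)}\big((d_1-1)g_1^{d_1-1}g_2' - (g_1^{d_1-1})'g_2\big) = 0$ because $\supp(g_1)\cap\supp(g_2)=\emptyset$ (closed, hence at positive distance, so $g_1$ and its derivatives vanish near $\supp(g_2)$ and $g_2$ vanishes off $\supp(g_2)$). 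Thus $A$ commutes with $T(g)$ on $\mathscr{D}$, and symmetrically $[B,T(g)] = 0$ on $\mathscr{D}$. We therefore take $H := (T(g)+q\1)^N$ with $N\in\NN$, $N \ge \max(d_1-1,d_2-1)$, and $q>0$ large enough (constraints below) that $T(g)+q\1 \ge \1$ (possible since $T(g)$ is bounded below by \cite{FH05}, as $g\ge0$). Then $H \ge \1$, $H$ preserves $\mathscr{D}$, and $\mathscr{D}=C^\infty(L_0)$ is a core for $H$, since it coincides with the space of smooth vectors for the projective representation $U$ of $\diff_+(S^1)$ and is $T(g)$-invariant (see \cite{CKLW18}). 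Since $A,B,T(g)$ all preserve $\mathscr{D}$ and $A$ (resp.\ $B$) commutes there with $T(g)$, we get $AH = HA$ and $BH = HB$ on $\mathscr{D}$, which is condition~(\ref{df2}). Condition~(\ref{df4}) is weak relative locality: $\supp(g_j^{d_j-1})\subseteq\supp(g_j)$ are disjoint, so $[A,B]=0$ on $C^\infty(L_0)=\mathscr{D}$ by relative locality of $\phi^{(1)},\phi^{(2)}$ and \cite[Appendix~A]{CKLW18}; as $A,B$ are symmetric and preserve $\mathscr{D}$, this gives $\langle A\Psi,B\Phi\rangle=\langle B\Psi,A\Phi\rangle$ on $\mathscr{D}$.

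It remains to check condition~(\ref{df1}). The local energy bound \eqref{eq:lebg} gives $\|A\Psi\| \le C_1\|(T(g_1)+q_1\1)^{d_1-1}\Psi\|$, and since enlarging $q_1$ only increases the right-hand side we may assume $T(g_1)+q_1\1 \ge \1$. Here we use the standard fact that the smeared stress-energy tensors $T(g_1)$ and $T(g_2)$ \emph{strongly commute}, because $g_1,g_2$ have disjoint supports --- the locality of the stress-energy tensor (Virasoro net); see e.g.\ \cite{CKLW18, WeinerThesis}. Hence $T(g_1)+q_1\1$ strongly commutes with $T(g)+q\1 = (T(g_1)+q_1\1) + (T(g_2)+(q-q_1)\1)$, and, choosing $q \ge q_1 + r_{g_2}$ with $r_{g_2} := -\min\mathrm{Sp}(T(g_2))$ so that $T(g_1)+q_1\1 \le T(g)+q\1$, the joint functional calculus for this commuting pair yields $(T(g_1)+q_1\1)^{d_1-1} \le (T(g)+q\1)^{d_1-1} \le (T(g)+q\1)^N = H$; squaring (again commuting operators) gives $(T(g_1)+q_1\1)^{2(d_1-1)} \le H^2$, so $\|A\Psi\| \le C_1\|H\Psi\|$ on $\mathscr{D}$. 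Enlarging $q$ further so that also $q \ge q_2 + r_{g_1}$, the same argument gives $\|B\Psi\| \le C_2\|H\Psi\|$. With $C := \max(C_1,C_2)$ all hypotheses of Theorem~\ref{th:DF} hold, so $A$ and $B$ are essentially self-adjoint on every core of $H$ --- in particular on $C^\infty(L_0)$ --- and strongly commute.

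The only real obstacle is condition~(\ref{df1}): the local energy bound controls $\phi^{(1)}(g_1^{d_1-1})$ only by the $(d_1-1)$-th power of the ``wrongly localized'' positive operator $T(g_1)+q_1\1$, and for $d_1>2$ the function $x\mapsto x^{d_1-1}$ is not operator monotone, so dominating $(T(g_1)+q_1\1)^{d_1-1}$ by a power of $T(g)+q\1$ genuinely relies on the strong commutativity of $T(g_1)$ and $T(g_2)$, which reduces the operator inequalities to scalar ones. The remaining steps are routine, the main care being to keep $C^\infty(L_0)$ simultaneously a common invariant domain for $A,B,T(g)$ and a core for $H$ so that all the weak-commutator manipulations are legitimate.
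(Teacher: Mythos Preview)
Your proof is correct and follows essentially the same approach as the paper: choose $H$ to be a suitable power of $T(g_1+g_2)+q\1$, use the primary commutation relation \eqref{eq:primary_smeared} together with the disjointness of supports to get $[A,H]=[B,H]=0$ on $C^\infty(L_0)$, use relative locality for $[A,B]=0$, and deduce the norm bound $\|A\Psi\|\le C\|H\Psi\|$ from the local energy bound via the strong commutativity of $T(g_1)$ and $T(g_2)$, then apply Theorem~\ref{th:DF}. Your write-up is in fact slightly more explicit than the paper's in justifying the operator inequality $(T(g_1)+q_1\1)^{d_1-1}\le H$ by joint functional calculus, which is exactly the point the paper compresses into the remark that ``$T(g_1)+\tfrac q2$ and $T(g_2)+\tfrac q2$ are positive and strongly commute.''
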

    \begin{proof}
    Let $g_1, g_2$ as above.
    Note that then $T(g_1 + g_2)$ is bounded below and it is essentially self-adjoint on $C^\infty(L_0)$.
    Furthermore, on $C^\infty(L_0)$, $\phi^{(1)}(g_1^{d-1})$ and $\phi^{(2)}(g_2^{d-1})$ weakly commute.
    As we have $[\phi^{(j)}(g_j^{d_j-1}), T(g_j)] = 0$ by \eqref{eq:primary_smeared}
    (in the sense of sesquilinear form) and since $g_1$ and $g_2$ have disjoint support,
    it holds that $[\phi^{(j)}(g_j^{d_j-1}), (T(g_1 + g_2) + q)^{d-1}] = 0$,
    where $d = \max\{d_1, d_2\}$ and $q > 0$ such that $T(g_1 + g_2) + q$ is positive
    (such a $q$ exists by \cite[Theorem 4.1]{FH05}).
    
    As $g_1$ and $g_2$ have disjoint supports, $\phi^{(1)}(g_1^{d_1-1})$ and $\phi^{(2)}(g_2^{d_2-1})$
    weakly commute.
    Furthermore, by taking $q$ even larger,
    we may assume that
    $\|T(g_j)^{d-1}\Psi\| \le \|(T(g_1+g_2)+q)^{d-1}\Psi\|$,
    because $T(g_1) + \frac q 2$ and $T(g_2) + \frac q 2$ are positive and strongly commute.
    Now from \eqref{eq:leb} and Theorem \ref{th:DF},
    with $A = \phi^{(1)}(g_1^{d_1-1}), B= \phi^{(2)}(g_2^{d_2-1})$ and $H = (T(g_1+g_2)+q)^{d-1} + \1$,
    we conclude that $\phi^{(1)}(g_1^{d_1-1})$ and $\phi^{(2)}(g_2^{d_2-1})$
    are essentially self-adjoint on $C^\infty(L_0)$ (which is a core of $H$) and strongly commute.
    \end{proof}
    
 \paragraph{Unitary VOA and conformal net.}
 Let us assume that $h=0$ and $V$ contains a distinguished vector $\Omega \in V_0$ and $V$ is generated a set of quasi-primary fields $\{\phi^{(j)}\}$.
 If they satisfy certain properties, $V$ admits a structure of a \textit{simple unitary vertex operator algebra (VOA)},
 generated by the field(s) $\{\phi^{(j)}\}$ and in particular one can define a local family of fields $\{Y(a,z): a \in V\}$ which is maximal in an appropriate sense (the family of \textbf{vertex operators}); we do not recall the definition here, but refer to \cite[Section 4.2]{CKLW18}.
 A VOA $V$ contains a distinguished stress-energy field $L(z)$.
 If there are other hermitian quasi-primary fields satisfying the Virasoro relations \eqref{eq:virasoro}, we call them
  \textbf{Virasoro fields}. A hermitian primary field $Y(a,z) = J(z)$, where $a$ belongs to $V_1$ of the subspace of eigenvectors of $L_0$ with eigenvalue $1$,
 is called a \textbf{current}. Any current satisfies $[J_m, J_n] = m\delta_{m+n}$ \cite[Proposition 6.3]{CKLW18}.
 
 Given a simple unitary VOA, one may try to construct a conformal net (see \cite[Section 3.3]{CKLW18} for the definition) 
 on the Hilbert space completion $\mathcal{H}_V$ of $V$ by
 \[
  \A(I) = W^*(\{Y(a,f):a\in V, f \in C^\infty(S^1, \CC), \supp f\subset I\}),
 \]
 together with the (projective) representation $U$ of $\diff_+(S^1)$ and $\Omega$, but the locality of $\A$ does not follow in general.
 Yet, \cite[Theorem 8.1]{CKLW18}, if the fields $\{\phi^{(j)}\}$ are strongly local, then $\A$ is local and we obtain a conformal net.
 In this case, $V$ is said to be strongly local.
 
    \begin{lemma}\label{lm:localfield}
     Assume that a simple unitary VOA $V$ contains relatively local hermitian primary fields $\phi^{(j)}, j=1,2$  with conformal dimension $d_j>1$
     satisfying either an optimal global energy bound or a local energy bound with $s_j = d_j-1$.
     Then $\phi^{(1)}(f_1), \phi^{(2)}(f_2)$ strongly commute
     if $f_1,f_2$ have disjoint supports. Furthermore, if $a \in V$ is such that $Y(a,z)$ satisfies polynomial energy bounds,  
     $\underline L$ is a Virasoro field  and $J$ is a  current then $Y(a,g)$ strongly commute with $\underline L(f)$  and  
     $J(f)$ if $g$ and $f$ have disjoint supports.  In particular $\phi^{(j)}(f_j)$ strongly commutes with $\underline L(f)$ and  $J(f)$ if $f_j$, $j=1,2$ and $f$ have disjoint supports.     
     
    \end{lemma}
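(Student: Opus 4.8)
The first assertion is essentially a restatement of Theorem \ref{th:strong} once one upgrades essential self-adjointness and strong commutativity of the specific operators $\phi^{(j)}(g_j^{d_j-1})$ to arbitrary test functions. The plan is as follows. First, if the $\phi^{(j)}$ satisfy an optimal global energy bound, then by Proposition \ref{pr:globaltolocal} they satisfy a local energy bound with $s_j=d_j-1$, so we may assume the local energy bound from the outset. Next, given arbitrary test functions $f_1,f_2$ with disjoint supports, choose proper intervals $I_1\ni\supp f_1$, $I_2\ni\supp f_2$ with $\overline{I_1}\cap\overline{I_2}=\emptyset$. The point is that $\A_{\phi^{(j)}}(I_j)$ is, by definition, the von Neumann algebra generated by \emph{all} $\phi^{(j)}(f)$ with $\supp f\subset I_j$; it therefore suffices to show $\A_{\phi^{(1)}}(I_1)$ and $\A_{\phi^{(2)}}(I_2)$ commute. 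By the remarks at the end of Section \ref{preliminaries} on reducing to hermitian fields, and since our fields are already hermitian, this follows once we exhibit, for a \emph{spanning} family of test functions supported in $I_j$, strongly commuting self-adjoint operators. Now any real smooth $f$ with $\supp f\subset I_j$ is a linear combination of nonnegative smooth functions of the form $g^{d_j-1}$ with $\supp g\subset I_j$ (shrink $g$ to have support inside $I_j$; write an arbitrary nonnegative function as $g^{d_j-1}$ by taking $(d_j-1)$-th roots, which is smooth where $g>0$ and extends smoothly by $0$ where $g$ has compact support away from its positivity set — exactly as in the proof of Proposition \ref{pr:globaltolocal}). For each such pair $g_1,g_2$ with disjoint supports, Theorem \ref{th:strong} gives that $\phi^{(1)}(g_1^{d_1-1})$ and $\phi^{(2)}(g_2^{d_2-1})$ are essentially self-adjoint on $C^\infty(L_0)$ and strongly commute; their closures are affiliated to $\A_{\phi^{(1)}}(I_1)$ and $\A_{\phi^{(2)}}(I_2)$ respectively. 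Since these operators generate the respective algebras, $\A_{\phi^{(1)}}(I_1)$ and $\A_{\phi^{(2)}}(I_2)$ commute, hence $\phi^{(1)}(f_1)$ and $\phi^{(2)}(f_2)$ strongly commute.

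For the second assertion, the idea is to reduce commutation of $Y(a,g)$ with $\underline L(f)$ and $J(f)$ to cases already covered by primary methods. Here $\underline L$ is a Virasoro field (hermitian quasi-primary of dimension $2$) and $J$ a current (hermitian primary of dimension $1$); both satisfy linear energy bounds — $\underline L$ by the argument quoted in Section \ref{energybounds} for the stress-energy field (the same bound holds for any field satisfying the Virasoro relations with the appropriate central charge), and $J$ because currents satisfy $[J_m,J_n]=m\delta_{m+n}$, whence $\|J_n\Psi\|^2=\|J_{-n}\Psi\|^2 = \langle\Psi,[J_n,J_{-n}]\Psi\rangle + \|J_n\Psi\|^2$ type identities give $\|J_n\Psi\|\le C\|(L_0+\1)^{1/2}\Psi\|$ (cf. the $d=1$ local energy bounds of \cite{CW-localenergy}, or directly Proposition \ref{pr:improving} with $\beta=0$). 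In either case the relevant field satisfies a linear energy bound, so the classical Glimm–Jaffe–Nelson/Driessler–Fröhlich argument \cite{DF77}, as used for the stress-energy tensor, applies: $Y(a,g)$ is polynomially energy bounded by hypothesis, $\underline L(f)$ (resp. $J(f)$) is linearly energy bounded, the two weakly commute on $C^\infty(L_0)$ because $a$ and the generating fields are local in the VOA (relative locality of $Y(a,z)$ with $Y(\underline a,z)$ for $\underline a$ the state generating $\underline L$ or $J$, by maximality of the vertex operators — recall every vertex operator is mutually local with every other), and the commutators $[Y(a,g),\underline L(f)]$ etc. are controlled by the polynomial energy bound on $Y(a,g)$ together with the linear bound on $\underline L(f)$. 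Thus Theorem \ref{th:DF} (or rather its version with nonvanishing but energy-bounded commutators, i.e. the original Driessler–Fröhlich theorem) yields strong commutativity. The final ``in particular'' is immediate: apply what has just been proven to $a$ equal to the state generating $\phi^{(j)}$, noting $\phi^{(j)}$ is polynomially energy bounded by Proposition \ref{pr:component}.

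The main obstacle, I expect, is the bookkeeping around the passage from the specific operators $\phi^{(j)}(g_j^{d_j-1})$ in Theorem \ref{th:strong} to the full algebras $\A_{\phi^{(j)}}(I_j)$ and hence to arbitrary test functions: one must be careful that the operators of the special form $g^{d_j-1}$ (with $g$ supported in a fixed proper interval) really do generate the same von Neumann algebra as all test functions supported in that interval, which rests on the linear-span observation together with the fact that affiliation of a closed operator to $\M$ is equivalent to affiliation of its real and imaginary parts, and on strong commutativity passing to generated von Neumann algebras. For the second assertion, the subtle point is verifying the weak commutativity hypothesis: this uses that in a simple unitary VOA all vertex operators are mutually (weakly, hence after smearing) local, and that weak locality in the formal-series sense is equivalent — for polynomially energy bounded fields — to vanishing of the smeared commutator on $C^\infty(L_0)$ for disjointly supported test functions, by \cite[Appendix A]{CKLW18}; once this is in hand the Driessler–Fröhlich machinery is standard since $\underline L$ and $J$ are linearly energy bounded.
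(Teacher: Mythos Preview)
Your overall strategy for the first assertion is the same as the paper's, but there are two genuine gaps.

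\textbf{First gap (affiliation).} You correctly identify the ``main obstacle'' as showing that the special operators $\phi^{(j)}(g^{d_j-1})$ generate the same von Neumann algebra as all $\phi^{(j)}(f)$ with $\supp f\subset I_j$, but you do not actually resolve it. The linear-span observation alone is not enough: if $f=\sum_k\alpha_k g_k^{d_j-1}$, the closed operator $\phi^{(j)}(f)$ is not automatically affiliated to the von Neumann algebra generated by the self-adjoint $\phi^{(j)}(g_k^{d_j-1})$, because an arbitrary element of the commutant need not preserve any common core. The paper closes this gap with a mollification argument (adapted from \cite[Lemma 6.5, Proposition 6.6]{CKLW18}): one first builds a conformal net $\A^+$ from the special operators only; then for $x\in\A^+(\tilde I)'$ with $\tilde I\supset\overline I$, rotation-mollified approximants $x_n\to x$ preserve $C^\infty(L_0)$ and commute with each $\phi^{(j)}(g_k^{d_j-1})$ on that core, hence with the linear combination; finally one intersects over $\tilde I$ using conformal covariance. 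Incidentally, your claim that an arbitrary nonnegative smooth function equals $g^{d_j-1}$ for smooth $g$ is false (e.g.\ $x^2$ has no smooth square root at the origin); the paper instead writes such a function as a \emph{difference} of two functions of this form, by adding a strictly positive function with smooth root.

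\textbf{Second gap (Driessler--Fr\"ohlich does not apply).} For the second assertion your plan is to invoke the Driessler--Fr\"ohlich theorem with $A=Y(a,g)$ and $B=\underline L(f)$ or $J(f)$. But both the version stated as Theorem \ref{th:DF} and the original \cite{DF77} require the \emph{linear} bound $\|A\Psi\|\le C\|H\Psi\|$, whereas $Y(a,g)$ is only polynomially energy bounded. Taking $H=(L_0+\1)^s$ for large $s$ does not help, since the commutator hypotheses on $[H,A]$ and $[H,B]$ then become unmanageable. The paper avoids this entirely by a different, more robust argument: since $\psi(f)$ (for $\psi=\underline L$ or $J$) is self-adjoint from its linear energy bound, the results of \cite{Toledano-Laredo99-1} give that $e^{it\psi(f)}$ preserves $C^\infty(L_0)$; one then shows that $t\mapsto Y(a,g)e^{it\psi(f)}\Psi$ satisfies the abstract Schr\"odinger equation $\frac{d}{dt}\Psi(t)=i\psi(f)\Psi(t)$ (using only weak commutativity on $C^\infty(L_0)$ and polynomial energy bounds for $Y(a,z)$), whence by uniqueness $Y(a,g)$ commutes with $e^{it\psi(f)}$. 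This requires only polynomial bounds on $Y(a,g)$, which is exactly the hypothesis.
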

    \begin{proof}
     Let us first note that these fields smeared with positive functions of certain form commute strongly:
     We know from Proposition \ref{pr:globaltolocal} and
     Theorem \ref{th:strong} that $\phi(f_1)$ and $\phi(f_2)$ strongly commute if the test functions $f_1$ and $f_2$ 
     are  of the form $g_1^{d_1-1}$ and $g_2^{d_2-1}$, where $g_1$ and $g_2$ are smooth nonnegative functions 
     \footnote{Not all nonnegative smooth functions have
     smooth $(d-1)$-th root.}. 
      Primary fields satisfy the covariance \eqref{eq:covariance}
     which preserve the set of nonnegative functions of the form $f_j^{d_j-1}$.
     Therefore, we can construct a (local) conformal net $(\A^+, U, \Omega)$ generated by the fields $\phi_1$, and $\phi_2$ smeared with these nonnegative test functions only. This net might not satisfy the cyclicity of vacuum, but this is not necessary.

     By the argument of \cite[Lemma 6.5]{CKLW18} and the proof of \cite[Proposition 6.6]{CKLW18}, for a linear combination 
     $h = \sum_{k=1}^m \alpha_k g_k^{d_j-1}$ with $g_k$ supported in $I$ and $\alpha_k \in \mathbb{C}$,  $k=1,\dots,m$,      
     $\phi^{(j)}(h)$ is affiliated to $\A^+(I)$: For any open interval $\tilde I$ containing the closure $\overline{I}$ of $I$ and 
     $x \in \A^+(\tilde I)'$, we can take
     a mollified operator $x_n$ which preserves the domain $C^\infty(L_0)$, $x_n\to x$ in the strong operator topology
     and commutes with $\phi^{(j)}(g_k^{d_j-1})$,  $k=1,\dots,m$,      
     hence with $\phi^{(j)}(h)$. Therefore, $\phi^{(j)}(h)$ commutes (strongly) with
     any operator in $\A^+(\tilde I)'$, but $\tilde I \supset \overline{I}$ was arbitrary and hence 
     $\phi^{(j)}(h)$ is affiliated with $\bigcap_{\tilde{I} \supset \overline{I}} \A^+(\tilde{I})$ which is equal to $\A^+(I)$ by conformal covariance.

     Now, any smooth function $f$ with support in $I$ can be obtained by a linear combination of functions of the form   
     $g^{d_j-1}$ with $\supp g \subset I$: It is enough to consider the real part, and the real part is a linear combination of
     two smooth nonnegative functions. Any smooth nonnegative function can be written as a difference
     of two function of the form $f^{d_j-1}$, because we can add a smooth positive function with a larger support which has a smooth $(d_j-1)$-th
     root (one can take function which is a product of $e^{-1/(t-a)}$), and the smoothness of the roots breaks only in points where the functions vanishes. As a consequence $\phi_i(f)$, $i=1,2$ is affiliated with $\A^+(I)$.
     This implies that $\phi_{j_1}(f_1)$ and $\phi_{j_2}(f_2)$ strongly commute
     if $f_1$ and $f_2$ have disjoint supports.

     We now come to the second part of the lemma. Arguing again as in \cite[Lemma 6.5]{CKLW18} and the proof of \cite[Proposition 6.6]{CKLW18} we can assume that $f$ is a real valued function.  As $\psi = J, \underline L$ satisfies the linear energy bound (see e.g. \cite{CKLW18}), we know that $\psi(f)$ is self-adjoint. 
    Now, for any $\Psi \in C^\infty(L_0$) let $\Psi(t) = Y(a,g)e^{it\psi(f)}\Psi$.  It follows by \cite[Proposition 2.2]{Toledano-Laredo99-1} and \cite[Corollary 2.2]{Toledano-Laredo99-1} together with the polynomial energy bounds for $Y(a,z)$ that  
     \begin{align*}
     \frac{d}{dt} \Psi(t) &=  iY(a,g)\psi(f)e^{it\psi(f)}\Psi = i\psi(f)Y(a,g)e^{it\psi(f)}\Psi  \\
     &=i\psi(f)\Psi(t) \,.
     \end{align*} 
     Hence, by the uniqueness of the solution of the abstract Schroedinger equation we have 
     $Y(a,g)e^{it\psi(f)}\Psi = e^{it\psi(f)}Y(a,g)\Psi$ for all $\Psi \in C^\infty(L_0)$ and all $t\in \mathbb{R}$ so that  $Y(a,g)$ commutes with $e^{it\psi(f)}$ and hence $Y(a,g)$ and $\psi(f)$ strongly commute. 
        
    \end{proof}

 The following is immediate from Lemma \ref{lm:localfield} and \cite[Theorem 8.1]{CKLW18}
  and generalizes \cite[Theorem 8.3]{CKLW18}.
    \begin{theorem}\label{th:localnet}
  Assume that a simple unitary VOA $V$ is generated by hermitian primary fields $\{\phi^{(j)}(z)\}$
 with conformal dimension $d_j$
 satisfying either a optimal global energy bound or a local energy bound with $s_j = d_j-1$,
 Virasoro fields and currents.
 Then it is strongly local.
    \end{theorem}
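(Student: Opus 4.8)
\emph{Proof proposal.} The plan is to deduce the statement from \cite[Theorem 8.1]{CKLW18}, according to which a simple unitary VOA generated by a family of polynomially energy-bounded hermitian quasi-primary fields is strongly local as soon as that family is itself \emph{strongly local}, i.e.\! any two of its members are strongly relatively local. The generating set here consists of the primary fields $\{\phi^{(j)}\}$ together with the Virasoro fields and the currents, all of which are quasi-primary; moreover they are pairwise relatively local, being vertex operators of $V$ and hence mutually local as formal series, which by \cite[Appendix A]{CKLW18} is precisely the weak relative locality appearing in Lemma \ref{lm:localfield} and Theorem \ref{th:DF}. They are also all polynomially energy bounded: the Virasoro fields and the currents satisfy linear energy bounds, and each $\phi^{(j)}$ is polynomially energy bounded either by Proposition \ref{pr:component} (when an optimal global bound is assumed) or as part of the data needed to define $\phi^{(j)}(f)$ in the case of a local energy bound \eqref{eq:lebg}. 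Hence it remains only to verify strong relative locality of every pair of generators.

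I would argue by type of pair. For a pair $(\phi^{(i)},\phi^{(j)})$, including $i=j$: each $\phi^{(j)}$ satisfies a local energy bound with $s_j=d_j-1$ --- by hypothesis, or by Proposition \ref{pr:globaltolocal} in the optimal-global case --- so the first assertion of Lemma \ref{lm:localfield} yields strong relative locality. For a pair consisting of one $\phi^{(j)}$ and one Virasoro field or one current: apply the second assertion of Lemma \ref{lm:localfield} with $Y(a,z)=\phi^{(j)}(z)$, which is polynomially energy bounded. For a pair among the Virasoro fields and the currents (the stress-energy field included): these fields satisfy linear energy bounds and are mutually relatively local, so strong relative locality follows from the classical commutator theorem of Glimm--Jaffe and Driessler--Fr\"ohlich, which is exactly the argument behind \cite[Theorem 8.3]{CKLW18} --- the result the present theorem extends. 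With every pair handled and the energy bounds in place, \cite[Theorem 8.1]{CKLW18} applies and $V$ is strongly local.

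The substantive content all lies upstream: Proposition \ref{pr:globaltolocal} converts an optimal global bound into a local one; Theorem \ref{th:strong} converts a local bound into strong commutativity of the specially-smeared fields, by invoking Theorem \ref{th:DF} with the \emph{commuting} reference operator $(T(g_1+g_2)+q)^{d-1}+\1$; and Lemma \ref{lm:localfield} bootstraps from test functions of the form $g^{d-1}$ to arbitrary test functions via a mollifier-and-affiliation argument. Given all of that, the only care required in the present proof is organizational: one must make sure that \emph{every} pair among the three kinds of generators is accounted for, and in particular that the purely Virasoro/current pairs --- which are outside the scope of Lemma \ref{lm:localfield} --- are supplied by the linear-energy-bound theory as in \cite[Theorem 8.3]{CKLW18}. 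I do not expect any genuine obstacle beyond that bookkeeping.
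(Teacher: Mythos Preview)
Your proposal is correct and matches the paper's argument, which simply records the theorem as immediate from Lemma \ref{lm:localfield} together with \cite[Theorem 8.1]{CKLW18}. One minor point: the Virasoro/current-only pairs you treat separately via linear energy bounds are in fact already covered by the second assertion of Lemma \ref{lm:localfield} (take $Y(a,z)$ there to be a Virasoro field or a current, which certainly satisfies polynomial energy bounds), so your extra appeal to \cite[Theorem 8.3]{CKLW18} is correct but redundant.
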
    
    
    Although fields have been analyzed case-by-case until now,
    we know no primary field which violates the optimal energy bounds and expect the following.
    
\begin{conjecture}
\label{optimalprimary}
 Let $V$ be a simple unitary VOA. Then every primary vertex operator $Y(a,z), a \in V\setminus V_1$ satisfies optimal energy bounds. 
\end{conjecture}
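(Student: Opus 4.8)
The plan would be to reduce the statement, via Proposition~\ref{pr:component}, to a single-mode global bound $\|\phi_k\Psi\|\le C\,\|(L_0+\1)^{d-1}\Psi\|$ for $\phi = Y(a,\slot)$ of conformal weight $d=\mathrm{wt}(a)\ge 2$ (the case $a\in V_0$ being trivial and $a\in V_1$ excluded), and then to produce the exponent $d-1$; by Proposition~\ref{pr:globaltolocal} this is already the smallest possible exponent for $\phi\ne0$, so only the inequality is at stake. Since a primary field is in particular quasi-primary, Proposition~\ref{pr:improving} applies, and it would suffice to show $\|[\phi_{-k},\phi_k](L_0+\1)^{-(2d-3)}\|<\infty$, i.e.\ that the commutator is dominated by $(L_0+\1)^{2d-3}$. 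The Borcherds commutator formula rewrites $[\phi_{-k},\phi_k]$ as a finite linear combination of the zero modes $(a_{(j)}a)_{(2d-2-j)}$ of the fields $c_j:=a_{(j)}a$, of conformal weight $2d-1-j$. For $j\ge1$ one has $\mathrm{wt}(c_j)\le 2d-2$, so an optimal bound for $c_j$ (degree $\mathrm{wt}(c_j)-1\le 2d-3$) makes its zero mode on $V_m$ grow no faster than $(m+1)^{2d-3}$; the only over-weight term, $j=0$, is harmless because skew-symmetry expresses $a_{(0)}a$ as an $L_{-1}$-derivative of a combination of the $c_j$, $j\ge1$, and $(L_{-1}^i c_i)_{(2d-2)}$ is a numerical multiple of $(c_i)_{(2d-2-i)}$, so that term telescopes into modes already present in the sum. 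Feeding this back into Proposition~\ref{pr:improving} gives the optimal bound for $\phi$.

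The missing input is thus an optimal energy bound for each composite field $c_j=a_{(j)}a$, $j\ge1$; these are no longer primary (at best quasi-primary), so one is forced to establish the statement for a wider class of fields. Two reductions make this plausible. First, \emph{derivatives are harmless}: any field is a linear combination of $L_{-1}$-derivatives of quasi-primary fields, and since $(L_{-1}^i b)_{(n)}$ is a degree-$i$ polynomial in $n$ times $b_{(n-i)}$, a degree-$(\mathrm{wt}\,b-1)$ bound for $b$ yields for $L_{-1}^i b$ (weight $\mathrm{wt}\,b+i$) a bound of degree $\mathrm{wt}\,b-1$, which is better than optimal. Second, one expects \emph{normal ordering to preserve optimal bounds}: on $V_m$ the $n$-th mode of $:\!bc\!:$ is a sum of $O(m)$ products $b_{(p)}c_{(q)}$ whose intermediate and final weights are of order $m$, each bounded via the optimal bounds of $b$ and $c$ by $(\mathrm{const})(m+1)^{(\mathrm{wt}\,b-1)+(\mathrm{wt}\,c-1)}$, so the $O(m)$ terms combine to $(m+1)^{\mathrm{wt}\,b+\mathrm{wt}\,c-1}$ --- exactly the optimal exponent; this is the mechanism behind the estimate of $:\!L^2\!:$ worked out by hand in Section~\ref{sec:W}. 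Since a unitary VOA is spanned by normal-ordered monomials in a fixed quasi-primary generating set and their derivatives, these two reductions would bring the conjecture down to optimal bounds for the generators --- and there the stress-energy tensor (weight $2$) is covered by the Virasoro linear bound and currents (weight $1$) by their known degree-$\tfrac12$ bound.

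The main obstacle is that this is not a genuine descent: the composites $c_j=a_{(j)}a$ --- and the normal-ordered monomials representing them --- typically involve generators of weight \emph{larger} than $d$ (for the $\mathcal W_3$ field $a=W$ of weight $3$ one already meets $c_1\sim\,:\!L^2\!:$ of weight $4$), so an induction on conformal weight does not close. What one really needs is either (i) a structural theorem, valid in every unitary VOA, that the OPE-coefficient fields $a_{(j)}a$ with $j\ge1$ can be written as normal-ordered polynomials in fields of strictly smaller weight --- so that, together with the product step, a simultaneous induction organized by the generators and their mutual OPEs would terminate, as it does for $\mathcal W_3$ starting from the stress-energy tensor --- or (ii) a more analytic argument extracting the growth $\|\phi_0|_{V_m}\|=O(m^{d-1})$ directly from the pole order $2d$ of $\langle\phi(z)\phi(w)\rangle$ and positivity of the inner product, bypassing composite fields altogether. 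Proving (i) in full generality, or finding a workable form of (ii), is exactly the open point, and is why the $\mathcal W_3$ bound of Section~\ref{sec:W} has to be obtained by direct computation rather than from a general principle.
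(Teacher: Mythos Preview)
The statement is labelled a \emph{conjecture} in the paper and is not proved there; the paper only records it and remarks that its validity would imply strong locality of every simple unitary VOA. Your proposal is consistent with this: you do not claim a proof, you outline a strategy (reduce via Proposition~\ref{pr:improving} to commutator bounds, expand $[\phi_{-k},\phi_k]$ by Borcherds, control the resulting composite fields) and you correctly isolate the obstruction --- the OPE coefficients $a_{(j)}a$ can have weight exceeding that of $a$, so no induction on conformal weight closes. That is exactly why the paper has to establish the $\cW_3$ bound by an ad hoc computation in Section~\ref{sec:W} rather than by a general argument.

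One technical point deserves sharpening. Your heuristic that ``normal ordering preserves optimal bounds'' --- $O(m)$ terms each of size $(m+1)^{d_b+d_c-2}$ summing to $(m+1)^{d_b+d_c-1}$ --- glosses over the dependence of the single-mode constants on the mode index: an optimal bound for $b$ gives $\|b_k(L_0+\1)^{-(d_b-1)}\|<\infty$ for each $k$, but the constant may grow with $|k|$, and in the normal-ordered sum the mode indices range up to order $m$. The paper's actual bound on $:\!L^2\!:_0$ (Corollary~\ref{cr:t2bound}) is \emph{not} obtained by such counting; it goes through the oscillator realization and a representation-specific estimate (Lemma~\ref{lm:Lk}, Proposition~\ref{pr:t2bound}), precisely because the naive term-by-term estimate loses a power. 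So while your structural reductions (derivatives are harmless, skew-symmetry disposes of $a_{(0)}a$) are correct, the ``product step'' is where the real analytic difficulty sits, and your option~(ii) --- extracting the growth directly from positivity and the two-point function --- is likely closer to what a general proof would need than the combinatorial option~(i).
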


\begin{remark} Any simple unitary vertex operator algebra $V$ is a direct sum of irreducible unitary positive energy representations of the Virasoro algebra. Hence, $V$ is generated by $L(z)$ and by Hermitian primary fields. Accordingly, if Conjecture \ref{optimalprimary} is true then 
$V$ is strongly local as a consequence of Theorem \ref{th:localnet} and this would imply  \cite[Conjecture 8.18]{CKLW18}.
\end{remark}

    \section{The \texorpdfstring{$\cW_3$}{W3}-algebra as a conformal net}
    \label{sec:W}
    We present concrete examples of primary fields that can be shown to be strongly local
    by Theorem \ref{th:localnet}: the $\cW_3$-algebra with $c>2$, whose definition we recall below.

    \subsection{The \texorpdfstring{$\cW_3$}{W3}-algebra and its lowest weight representations}\label{w3}
    The $\cW_3$-algebra is defined, for $c\neq -\frac{22}5$,
    as a pair of families of operators $\{L_n\}_{n\in \ZZ}, \{W_n\}_{n\in \ZZ}$
    on a $\CC$-linear space $V$ such that
    \begin{align*}
    \text{for each } \Psi \in V, \quad L_n\Psi = W_n\Psi = 0 \quad \text{ for sufficiently large } n
    \end{align*}
    ($n$ depends on $\Psi$) satisfying \cite{BS93, Artamonov16}:
    \begin{align}
     [L_m, L_n] &= (m-n)L_{m+n} + \frac{c}{12}m(m^2-1)\delta_{m+n,0}, \nonumber\\
     [L_m, W_n] &= (2m-n)W_{m+n}, \nonumber \\
     [W_m, W_n] &= \frac{c}{3\cdot 5!}(m^2-4)(m^2-1)m\delta_{m+n,0} \nonumber \\
                &\quad + b^2(m-n)\L_{m+n} + \left[\frac1{20}(m-n)(2m^2-mn+2n^2-8))\right]L_{m+n}, \label{eq:comm}
    \end{align}
    where $b^2 = \frac{16}{22+5c}$ and
    $\L_n = \sum_{k>-2} L_{n-k} L_k + \sum_{k\le-2} L_k L_{n-k} - \frac3{10}(n+2)(n+3)L_n$.
    We call such a pair a representation of the $\cW_3$-algebra.
    
    For a given triple $(c,h,w) \in \CC^3, c\neq -\frac{22}5$, a representation of
    the $\cW_3$-algebra is said to be a \textbf{lowest weight representation} if there is a cyclic vector
    $\Omega = \Omega_{c,h,w}$ satisfying
    \begin{align*}
     L_n\Omega &= W_n\Omega = 0 \text{ for } n > 0, \nonumber\\
     L_0\Omega &= h\Omega, \\
     W_0\Omega &= w\Omega.\nonumber
    \end{align*}
    Lowest weight representations exist for any such $(c,h,w)$
    (which is nontrivial because the $\cW_3$-algebra is not a Lie algebra, hence the standard
    construction through the universal enveloping algebra does not work, see e.g.\! \cite{CTW22}). 
    These representations can be in general reducible but, for any triple $(c,h,w) \in \CC^3, c\neq -\frac{22}5$ there 
    exists, up to isomorphism, a unique irreducible lowest weight representation.    
    
    We showed in \cite{CTW22} that, for any $c \ge 2, h=w=0$,
    there is a positive-definite sesquilinear invariant form $\<\cdot, \cdot\>$
    (an invariant scalar product) on
    the irreducible lowest weight representation (the \textbf{vacuum representation}) $V_{c,0,0}^{\cW_3}$:
    it holds that $(L_n)^\dagger = L_{-n}, (W_n)^\dagger = W_{-n}$ with respect to this form.

    \subsection{An energy bound for \texorpdfstring{$:L^2:$}{L2}}
    \label{sec:T^2}
    From the commutation relations \eqref{eq:comm}, the components $\{W_n\}$ have the conformal dimension $d=3$,
    hence we set $W(z) = \sum_n W_n z^{-n-3}$.
    The degree of the optimal energy bound for $W(z)$ 
    in Proposition \ref{pr:globaltolocal} is $s=d-1=2$. 
    To apply Proposition \ref{pr:improving} to the $\cW_3$-algebra in order to obtain
    the energy bound with $s = 2$, we need to estimate the right-hand side of
    \eqref{eq:comm} which contains $:L^2:$ and
    to have a bound with $s = 3$.
    Again from the commutation relation \eqref{eq:comm}, it is rather straightforward
    to obtain a bound with $s = 4$, but this is not enough to apply Theorem \ref{th:localnet}.
    With more computations using concrete representations, we can obtain $s = 3$, as we show below.
    More precisely, we will show the following bound.
     \[
      0 \le :L^2:_0 \le 5(c+1)(\1+L_0^3).
     \]

    \paragraph{Algebraic properties of $:L^2:$.}
    Let $\{L_n\}$ be a positive-energy unitary representation of the Virasoro algebra with $c > 0$ on $V$.
    The Fourier modes of the Wick square of $L(z) = \sum_n L_n z^{-n-2}$ is given by
    \[
     :L^2:_n = \sum_{k > -2} L_{n-k} L_k + \sum_{k \le -2} L_k L_{n-k}.
    \]
    By direct computations, we see that
    \begin{align*}
     [L_0, :L^2:_n] &= -n:L^2:_n \\
     [L_1, :L^2:_n] &= (3-n):L^2:_{n+1}
    \end{align*}
    In other words, $:L(z)^2:$ is quasi-primary and has conformal dimension $4$. From the Virasoro algebra relations,
    it follows that
    \begin{align}\label{eq:t20}
     :L^2:_0 = 2L_0 + L_0^2 + 2 \sum_{k\ge 1} L_{-k}L_k,
    \end{align}
    therefore, $:L^2:_0$ is a nonnegative operator which leaves invariant each eigenspace of $L_0$.
    
    \paragraph{An estimate in the oscillator representation, case $c=1, h=0$.}
    In order to derive a bound on $:L^2:_0$ in terms of $L_0$, we work in the vacuum representation $V_0^\uone$
    of the $\uone$-current algebra.
    The $\uone$-current  (or Heisenberg) algebra is an infinite-dimensional Lie algebra generated
    by $\{J_n\}_{n\in\ZZ}$ and a central element $Z$ with commutation relations
    \begin{align*}
     [J_m, J_n] = m\delta_{m+n,0} Z.
    \end{align*}
    It admits a vacuum representation:
    there is a representation $V^{\uone}$ where there is
    a cyclic vector $\Omega$ such that 
    \[
     J_m\Omega =0 \text{ for all } m \ge 0.
    \]
    Here vectors of the form
    \[
     J_{-n_1}\cdots J_{-n_k}\Omega,
    \]
    where $1 \le n_1 \le \ldots \le n_k$, form a basis.
    There exists a positive-definite sesquilinear form $\<\cdot,\cdot\>$ on
    $V_0^\uone$ with normalization
    $\<\Omega,\Omega\>=1$ such that $\<J_n \Psi, \Phi\> = \<\Psi, J_{-n}\Phi\>$,
    see e.g.\! \cite{KR87}.

    A representation of the Virasoro algebra with $c=1$ is given by
    \begin{align}\label{eq:sugawara}
     L_n = \frac12 :J^2:_n = \frac12\left(\sum_{k>-1} J_{n-k} J_{k} + \sum_{k\le -1} J_k J_{n-k}\right).
    \end{align}
    Let us denote $(V_0^\uone)_n = \ker (L_0 - n\1)$, then $(V_0^\uone)_n$ are pairwise orthogonal and finite-dimensional, and they span $V_0^\uone$.
    On each vector $\Psi \in V_0^\uone$, the sum
    \[
     J_- = \sum_{n > 0} J_n
    \]
    becomes finite, hence $J_-$ is a linear operator on $V_0^\uone$. On the other hand,
    \[
     J_+ = \sum_{n < 0} J_n
    \]
    cannot be defined as a linear operator. Nevertheless, it makes sense as a sesquilinear form:
    \[
     \<\Psi_1, J_+\Psi_2\> := \sum_{n < 0}\<\Psi_1, J_n\Psi_2\> = \<J_-\Psi_1, \Psi_2\>.
    \]
    In this sense, $J_+$ can be seen as the formal adjoint of $J_-$, although it is not an operator.
    Similarly, $J_-J_-$ makes sense as an operator, and $J_+J_+, J_+J_-$ can be defined as sesquilinear forms
    (but $J_-J_+$ cannot).
    We can also define the infinite sum $\sum_n L_n$ as a sesquilinear form, namely
    $\<\Psi_1, \sum_n L_n \Psi_2\> := \<\sum_{n>0} L_n\Psi_1, \Psi_2\> + \<\Psi_1, \sum_{n \ge 0}L_n \Psi_2\>$.

    With this understanding, the following holds as sesquilinear forms
    by straightforward computations from the Sugawara formula \eqref{eq:sugawara}
    (for example, by counting how many times a product $J_k J_\ell$ appears in both sides)
    and the fact $J_0 = 0$ in $V_0^\uone$:
    \begin{align}\label{eq:sumLn}
     \sum_n L_n = \frac12(J_+J_+ + J_-J_-) + J_+J_-.
    \end{align}
    
    Next, let us observe that if $L_0 \Psi = n\Psi$, then $\|J_- \Psi\| = \sqrt n \|\Psi\|$.
    Indeed, as the vectors $J_k \Psi \in (V_0^\uone)_{n-k}$ are pairwise orthogonal,
    \[
     \|J_-\Psi\|^2 = \sum_{k > 0} \|J_k \Psi\|^2 = \sum_{k > 0} \<\Psi, J_{-k}J_k\Psi\> = \<\Psi, L_0\Psi\> = n\|\Psi\|^2.
    \]
    From this it follows that
    \[
    \text{if }\quad \Psi \in \bigoplus_{k=0}^n (V_0^\uone)_k, \quad\text{ then } \quad
    \|J_-\Psi\| \le n \|\Psi\|.
    \]
    Indeed, by writing $\Psi = \sum_{k=0}^n \Psi_n, \Psi_n \in (V_0^\uone)_n$, we have
    \[
     \|J_-\Psi\|^2 = \left\|\sum_{k=0}^n J_-\Psi_k\right\|^2 \le n\sum_{k=0}^n \|J_-\Psi_k\|^2 = n\sum_{k=1}^n k\|\Psi\|^2 \le n^2 \|\Psi\|^2.
    \]
    
    The first nontrivial estimate of degree $3$ can be obtained in this representation.
    \begin{lemma}\label{lm:Lk}
     For $\{L_k\}$ in $V_0^\uone$ defined above and for any $\Psi \in (V_0^\uone)_n$, it holds that
     \[
      0 \le \left\<\Psi, \sum_{k\ge 0} L_{-k}L_k\Psi\right\> \le \frac94 n^3
     \]
    \end{lemma}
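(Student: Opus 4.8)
The plan is to push the whole estimate down to the action of the current modes $J_k$ on the Fock space $V_0^\uone$, using the Sugawara presentation \eqref{eq:sugawara} and the current bounds recorded above ($\|J_-\Psi\|=\sqrt n\,\|\Psi\|$ for $\Psi\in(V_0^\uone)_n$, and more generally $\|J_-\,\cdot\,\|\le\sqrt{(\text{top }L_0\text{-level})}\,\|\cdot\|$ on a pure eigenvector). Nonnegativity is immediate: since $(L_k)^\dagger=L_{-k}$ we have $\<\Psi,\sum_{k\ge0}L_{-k}L_k\Psi\>=\sum_{k\ge0}\|L_k\Psi\|^2\ge0$. For the upper bound, the first observation is that $L_k\Psi\in(V_0^\uone)_{n-k}$, so the vectors $\{L_k\Psi\}_{k\ge0}$ are mutually orthogonal and
\[
\sum_{k\ge0}\|L_k\Psi\|^2=\Big\|\sum_{k\ge0}L_k\Psi\Big\|^2=\|L_{\ge0}\Psi\|^2,\qquad L_{\ge0}:=\sum_{k\ge0}L_k ,
\]
where $L_{\ge0}$ is a genuine operator on $V_0^\uone$ because only finitely many $L_k$ act nontrivially on any given vector. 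Using \eqref{eq:sugawara}, the relation $J_0=0$ on $V_0^\uone$, and the same rearrangement of modes that produced \eqref{eq:sumLn}, I would rewrite $2L_{\ge0}\Psi=J_-^2\Psi+2\zeta$ with $\zeta:=\sum_{m\ge1}J_{-m}\big(\sum_{\ell\ge m}J_\ell\big)\Psi$, splitting $L_{\ge0}\Psi$ into a ``disconnected'' term $\tfrac12 J_-^2\Psi$ (lowering $L_0$ by $2$) and a ``connected'' term $\zeta$.

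The disconnected term is easy. Writing $J_-\Psi=\sum_{\ell=1}^n J_\ell\Psi$ in its $L_0$-eigencomponents, the $\ell$-th one lying in $(V_0^\uone)_{n-\ell}$, and combining $\|J_-\,\cdot\,\|=\sqrt{(\text{level})}\,\|\cdot\|$ with the elementary inequality $\big\|\sum_j v_j\big\|^2\le n\sum_j\|v_j\|^2$ (there are $n$ components), one gets
\[
\|J_-^2\Psi\|^2\ \le\ n\sum_{\ell=1}^n(n-\ell)\,\|J_\ell\Psi\|^2\ \le\ n^2\sum_{\ell=1}^n\|J_\ell\Psi\|^2\ =\ n^2\,\|J_-\Psi\|^2\ =\ n^3\,\|\Psi\|^2 .
\]

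The connected term $\zeta$ is the heart of the matter, and I expect this to be the main obstacle. The natural approach is to decompose $\zeta=\sum_{p=1}^n\zeta_p$ into $L_0$-eigencomponents, with $\zeta_n=L_0\Psi=n\Psi$ and $\zeta_p=\sum_{m=1}^p J_{-m}J_{\,n-p+m}\Psi$ for $p<n$, and then to expand each $\|\zeta_p\|^2=\sum_{m,m'=1}^p\<J_{-m}J_{n-p+m}\Psi,\,J_{-m'}J_{n-p+m'}\Psi\>$ by moving the annihilation modes past the creation modes via $[J_i,J_j]=i\,\delta_{i+j,0}$. The diagonal terms satisfy $\|J_{-m}J_{n-p+m}\Psi\|^2=\|J_mJ_{n-p+m}\Psi\|^2+m\|J_{n-p+m}\Psi\|^2\le p\,\|J_{n-p+m}\Psi\|^2$, and summing these over $m$ and then over $p$ contributes $O(n^3)\|\Psi\|^2$, using $\sum_\ell \ell\,\|J_\ell\Psi\|^2=\<\Psi,\sum_\ell\ell\,J_{-\ell}J_\ell\Psi\>\le n\<\Psi,L_0\Psi\>=n^2\|\Psi\|^2$. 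The off-diagonal terms are fourth moments of $\Psi$ in the oscillators, and here one must compute honestly from the Heisenberg relations: the cancellations built into the normal ordering are exactly what keeps the estimate at degree $3$. The naive alternatives — triangle inequality applied to $\zeta$, or to each $\zeta_p$, or Cauchy--Schwarz on the individual vectors $\|J_{-m}J_{n-p+m}\Psi\|$ — all overshoot to degree $4$ in $n$, precisely because they discard those cancellations. Once $\zeta$ is controlled, one assembles $\|L_{\ge0}\Psi\|^2=\tfrac14\|J_-^2\Psi+2\zeta\|^2$, keeping the interference term $\Re\<J_-^2\Psi,\zeta\>$ rather than throwing it away by the triangle inequality (the clean constant $\tfrac94$ seems to rely on this interference), and tracking the constants yields $\sum_{k\ge0}\|L_k\Psi\|^2\le\tfrac94 n^3\|\Psi\|^2$. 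Here the right-hand side is to be read with a factor $\|\Psi\|^2$, i.e.\ for $\|\Psi\|=1$, since both sides scale quadratically in $\Psi$.
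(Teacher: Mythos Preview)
Your decomposition $2L_{\ge 0}\Psi = J_-^2\Psi + 2\zeta$ is correct, and your bound $\|J_-^2\Psi\| \le n^{3/2}\|\Psi\|$ is fine. But there is a genuine gap at exactly the point you yourself flag as ``the main obstacle'': you never actually bound $\|\zeta\|$. You say the off-diagonal fourth moments must be ``computed honestly'' and that cancellations will keep the degree at $3$, but you do not carry this out; you only treat the diagonal terms. Since your entire strategy for reaching $\tfrac94 n^3$ rests on knowing $\|\zeta\|\le n^{3/2}\|\Psi\|$, the argument as written is incomplete.

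The missing step is in fact one line, and it is precisely the device in the paper's proof. Your $\zeta$ lies in $(V_0^\uone)_{\le n}$, and for any $\Phi\in(V_0^\uone)_{\le n}$ one has $\langle\Phi,\zeta\rangle=\langle J_-\Phi,J_-\Psi\rangle$ (this says exactly that $\zeta$ is the component of the formal expression $J_+J_-\Psi$ in levels $\le n$, i.e.\ the $J_+J_-$ contribution in \eqref{eq:sumLn}). Hence
\[
\|\zeta\|=\sup_{\substack{\Phi\in(V_0^\uone)_{\le n}\\ \|\Phi\|=1}}|\langle J_-\Phi,J_-\Psi\rangle|
\ \le\ \Big(\sup_{\|\Phi\|=1}\|J_-\Phi\|\Big)\,\|J_-\Psi\|
\ \le\ n\cdot\sqrt n\,\|\Psi\|,
\]
using the estimate $\|J_-\Phi\|\le n\|\Phi\|$ for $\Phi\in(V_0^\uone)_{\le n}$ recorded just before the lemma. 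No fourth-moment expansion is needed: pairing against a test vector $\Phi$ converts the troublesome creation part $J_+$ back into the easy annihilation operator $J_-$ acting on $\Phi$. The paper runs this variational argument directly on $\|L_{\ge 0}\Psi\|^2 = \sup_{\Phi} |\langle \Phi, \sum_{k\in\ZZ} L_k\Psi\rangle|^2$ via \eqref{eq:sumLn} (noting that the $J_+J_+$ term drops out since $J_-J_-\Phi\perp\Psi$), rather than isolating $\zeta$ first, but the content is identical.

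One minor remark: once $\|J_-^2\Psi\|\le n^{3/2}\|\Psi\|$ and $\|\zeta\|\le n^{3/2}\|\Psi\|$ are in hand, the plain triangle inequality already gives $\|L_{\ge 0}\Psi\|\le\tfrac12(n^{3/2}+2n^{3/2})=\tfrac32 n^{3/2}\|\Psi\|$ and hence the constant $\tfrac94$. Your comment about needing to retain the interference term is a red herring.
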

    \begin{proof}
     The vectors $L_k\Psi$ are pairwise orthogonal for $k\in \ZZ$, and they are contained in
     \[
      (V_0^\uone)_{\le n} := \bigoplus_{j=0}^n (V_0^\uone)_j.
     \]
     If $k < 0$, then $L_k\Psi$ is orthogonal to $(V_0^\uone)_{\le n}$. Therefore,
     \[
      \left\<\Psi, \sum_{k\ge 0} L_{-k}L_k\Psi\right\> = \left\| \sum_{k\ge 0} L_k\Psi\right\|^2
      = \underset{\Phi\in (V_0^\uone)_{\le n}, \|\Phi\| = 1}\max\left\{\left|\left\<\Phi, \sum_{k\in\ZZ} L_k\Psi\right\>\right|^2\right\}
     \]
     By substituting $\sum_{k\in\ZZ} L_k$ by the right-hand side of \eqref{eq:sumLn} and by the fact that
     $J_-J_- (V_0^\uone)_{\le n} \subset (V_0^\uone)_{< n}$, and hence $J_-J_- (V_0^\uone)_{\le n}$ is
     orthogonal to $\Psi$, we obtain
     \begin{align*}
      \left\<\Psi, \sum_{k\ge 0} L_{-k}L_k\Psi\right\> 
      &= \underset{\Phi\in (V_0^\uone)_{\le n}, \|\Phi\| = 1}
      \max\left\{\left|\left\<\Phi, \frac12 J_-J_- \Psi\right\> + \left\<J_-\Phi, J_- \Psi\right\>\right|^2\right\} \\
      &\le \left(\frac12n\sqrt n + n\sqrt n\right)^2 \|\Psi\|^2 \le \frac94 n^3 \|\Psi\|^2.
     \end{align*}
    \end{proof}
    
    \paragraph{An estimate in the oscillator representation, case $c>1$.}
    We need to extend the above results to $c > 1$. For this purpose, we take the representation
    of the Virasoro algebra on $V_0^\uone$ with $c>1$ \cite[Section 3.4]{KR87}. Its Fourier coefficients are
    \begin{align}\label{eq:virasoro-nonunitary}
     L^{\k, \eta}_n := \begin{cases}
                        L_0 + \frac12 (\k^2 + \eta^2)\1, \quad \text{if } n=0, \\
                        L_n + (\k + in\eta)J_n, \quad \text{if } n \neq 0.
                       \end{cases}
    \end{align}
    Although these representations are \textit{not unitary} for $\k,\eta \notin \RR$,
    we can still use them in order to obtain energy bounds.
    
    \begin{lemma}
     Let $\k \in \RR \setminus \{0\}$ and $\eta \in \RR \cup i\RR$ such that
     $\k^2 + \eta^2 > 0$.
     Then the representation \eqref{eq:virasoro-nonunitary} is isomorphic to the Verma module
     with $c = 1+\k^2 > 1$ and $h = \frac{\k^2 + \eta^2}2$,
     and in particular, is irreducible.
    \end{lemma}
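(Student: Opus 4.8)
The plan is to recognise \eqref{eq:virasoro-nonunitary} as a lowest-weight Virasoro module, read off its lowest weight and central charge, match graded dimensions with the corresponding Verma module, and obtain irreducibility (hence the isomorphism) from the Kac determinant formula. \emph{Step 1 (a lowest-weight representation).} A direct computation from $[L_m,J_n]=-nJ_{m+n}$, $[J_m,J_n]=m\delta_{m+n,0}$ and $J_0=0$ on $V_0^\uone$ shows that $\{L^{\k,\eta}_n\}$ satisfies the Virasoro relations: because $n\mapsto\k+in\eta$ is affine, the $J$-linear contributions to $[L^{\k,\eta}_m,L^{\k,\eta}_n]$ recombine into $(m-n)L^{\k,\eta}_{m+n}$, and the scalar contribution, produced by $[J_m,J_{-m}]=m\1$ together with the $\tfrac12(\k^2+\eta^2)\1$ shift in $L^{\k,\eta}_0$, has the right form; reading off the coefficients gives central charge $c=1+\k^2$. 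Since $L_n\Omega=J_n\Omega=0$ for $n\ge 0$, one gets $L^{\k,\eta}_n\Omega=0$ for $n>0$ and $L^{\k,\eta}_0\Omega=\tfrac12(\k^2+\eta^2)\Omega$, so $\Omega$ is a lowest-weight vector of weight $h=\tfrac{\k^2+\eta^2}{2}$. (This is the Feigin--Fuchs/Fock construction recalled in \cite[Section 3.4]{KR87}.)

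\emph{Step 2 (characters and irreducibility of $M(c,h)$).} Grading $V_0^\uone$ by $L^{\k,\eta}_0$, the weight-$(h+n)$ eigenspace is $(V_0^\uone)_n=\ker(L_0-n\1)$, of dimension $p(n)$ (the number of partitions of $n$) --- exactly the graded dimension of the Verma module $M(c,h)$. Moreover $c=1+\k^2>1$ since $\k\ne 0$, and $h=\tfrac{\k^2+\eta^2}{2}$ is a \emph{positive real}: $\k^2$ is a positive real, $\eta^2$ is real because $\eta\in\RR\cup i\RR$, and $\k^2+\eta^2>0$ by hypothesis. By the Kac determinant formula $M(c,h)$ is reducible only if $h=h_{r,s}(c)$ for some integers $r,s\ge 1$; but for every $c>1$ each $h_{r,s}(c)$ is non-real when $r\ne s$ and non-positive when $r=s$, so a positive real $h$ never coincides with one of them. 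Hence $M(c,h)$ is irreducible.

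\emph{Step 3 (assembling).} Let $W\subseteq V_0^\uone$ be the Virasoro submodule generated by $\Omega$; it is a nonzero lowest-weight module of lowest weight $(c,h)$, hence a nonzero quotient of $M(c,h)$, hence $W\cong M(c,h)$ by Step 2, so $\dim W_{h+n}=p(n)$ for all $n$. Since $W_{h+n}\subseteq(V_0^\uone)_{h+n}$ and the latter also has dimension $p(n)$, we get $W=V_0^\uone$, and therefore $V_0^\uone\cong M(c,h)$, which is in particular irreducible.

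There is no deep obstacle: the argument is essentially bookkeeping. The points that need care are the verification of the Virasoro relations and of the value of $c$ in Step 1, and --- in Step 2 --- the observation that the hypothesis $\k^2+\eta^2>0$ together with $\eta\in\RR\cup i\RR$ is precisely what makes $h$ a positive real number, which is what the Kac determinant argument requires (if $h$ were $\le 0$, e.g.\ for $\eta$ purely imaginary with $|\eta|\ge|\k|$, the Verma module could be reducible). In the unitary subcase $\k,\eta\in\RR$ one may bypass the Kac determinant entirely: there $(L^{\k,\eta}_n)^\dagger=L^{\k,\eta}_{-n}$ with respect to the positive-definite form on $V_0^\uone$, so $V_0^\uone$ is a unitary lowest-weight module and hence irreducible, the isomorphism with $M(c,h)$ then following from the character count as in Step 3.
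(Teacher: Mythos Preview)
Your proof is correct and matches the paper's approach: identify $\Omega$ as a lowest-weight vector, use the Kac determinant to obtain irreducibility of $M(c,h)$ for $c>1$, $h>0$, and then conclude by a graded-dimension count that the cyclic Virasoro submodule generated by $\Omega$ exhausts $V_0^\uone$. One small caveat: your assertion that $h_{r,s}(c)$ is non-real for $r\neq s$ holds only for $1<c<25$; for $c\geq 25$ the $h_{r,s}$ are real but still non-positive, so the conclusion is unaffected---the paper sidesteps this by simply citing positivity of the Kac determinants from \cite[Proposition~8.2(b)]{KR87}.
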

    \begin{proof}
     The subspaces $(V_0^\uone)_n$ remain eigenspaces of $L^{\k,\eta}_0$, but with
     eigenvalues $n+h$. In particular, the vacuum vector $\Omega$ for the $\uone$-current is the lowest weight vector for this representation.
     For the values $c > 1, h > 0$, the Kac determinants of the Virasoro algebra are positive, hence
     the Verma module is irreducible \cite[Proposition 8.2(b)]{KR87}.
     Therefore, the restriction of $\{L^{\k, \eta}_n\}$ to
     the subspace $M$ generated from $\Omega$ by  $\{L^{\k, \eta}_n\}$ is equivalent to the Verma module.
     
     It remains to show that $M = V_0^\uone$. This follows because for each $n$,
     $\dim (V_0^\uone)_n = p(n)$, where $p(n)$ is the number of partitions of $n$,
     while $(V_0^\uone)_n \supset \ker (L^{\k, \eta}_0|_M - (n+h))$ and $\dim(\ker (L^{\k, \eta}_0|_M - (n+h))) = p(n)$
     as well. 
    \end{proof}
    
    \begin{proposition}\label{pr:t2bound}
     Let $\{L_n\}$ be the irreducible lowest weight representation of the Virasoro algebra with
     $c>1, h>0$. Then for every $\Psi \in V$,
     \[
      \text{if }\quad L_0\Psi = (n+h)\Psi, \quad \sum_{k\ge 0} L_{-k}L_k \Psi = \l\Psi,
      \quad \text{then }\quad |\l| \le 5(c+1)(n+h)^3.
     \]
    \end{proposition}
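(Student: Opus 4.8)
The plan is to pull the statement back to the oscillator representation $V_0^\uone$ through the isomorphism of the preceding Lemma, and then estimate everything there, using Lemma~\ref{lm:Lk} for the pure Sugawara part and the elementary Fock-space identities recalled above for the remaining current bilinears.

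First I would use the preceding Lemma to identify the given irreducible Virasoro module ($c>1$, $h>0$) with $(V_0^\uone,\{L^{\k,\eta}_n\})$ for suitable parameters $\k,\eta$, where $L^{\k,\eta}_n = L_n + (\k+in\eta)J_n$ for $n\neq 0$, with $\{L_n\}$ the $c=1$ Sugawara modes of \eqref{eq:sugawara}, and $L^{\k,\eta}_0 = L_0+h\1$. Under this identification the eigenspace $\{\Psi : L_0\Psi = (n+h)\Psi\}$ becomes the finite-dimensional space $(V_0^\uone)_n$, and the algebraic operator $\sum_{k\ge 0}L_{-k}L_k$ becomes $A := \sum_{k\ge 0}L^{\k,\eta}_{-k}L^{\k,\eta}_k$ acting on $(V_0^\uone)_n$. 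Since $\l$ is then an eigenvalue of $A|_{(V_0^\uone)_n}$, and every eigenvalue of an operator on a finite-dimensional inner product space lies in its numerical range, it suffices to bound $|\<\Psi,A\Psi\>|$ for unit vectors $\Psi\in(V_0^\uone)_n$, the inner product being the one of $V_0^\uone$ (note that $A$ is $V_0^\uone$-symmetric only when $\eta$ is real, but this reduction does not require symmetry).

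Next I would expand $L^{\k,\eta}_{-k}L^{\k,\eta}_k$ for $k\ge 1$ and take the diagonal matrix element, which gives
\begin{align*}
\<\Psi,A\Psi\> = (n+h)^2\|\Psi\|^2 + \sum_{k\ge 1}\big(\<\Psi,L_{-k}L_k\Psi\> + 2\re\big[(\k+ik\eta)\<L_k\Psi,J_k\Psi\>\big] + (\k^2+k^2\eta^2)\|J_k\Psi\|^2\big).
\end{align*}
For the Sugawara part, Lemma~\ref{lm:Lk} gives $\sum_{k\ge 1}\<\Psi,L_{-k}L_k\Psi\> = \sum_{k\ge 1}\|L_k\Psi\|^2 \le \frac94 n^3\|\Psi\|^2$. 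For the current bilinears I would use $\sum_{k\ge 1}\|J_k\Psi\|^2 = \<\Psi,L_0\Psi\> = n\|\Psi\|^2$ (from \eqref{eq:sugawara} and $J_0=0$, exactly as in the proof of Lemma~\ref{lm:Lk}) together with $\sum_{k\ge 1}k^2\|J_k\Psi\|^2 \le n^2\sum_{k\ge 1}\|J_k\Psi\|^2 = n^3\|\Psi\|^2$, which holds since $J_k\Psi\in(V_0^\uone)_{n-k}$ vanishes for $k>n$; the cross terms are then controlled by Cauchy--Schwarz together with $|\k+ik\eta|\le|\k|+k|\eta|$ (writing $|\eta|:=|\eta^2|^{1/2}$). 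Collecting these, re-expressing $\k$ and $\eta$ through $c$ and $h$, and using $(n+h)^2\le(n+h)^3$, one is left with a polynomial in $n,h,c$ which should collapse to a bound of the form $5(c+1)(n+h)^3$.

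The step that I expect to require genuine care, rather than routine bookkeeping, is the weighted current bilinear $\sum_{k\ge 1}k^2\eta^2\|J_k\Psi\|^2$ and its cross-term companion $\sum_{k\ge 1}k|\eta|\,\|L_k\Psi\|\,\|J_k\Psi\|$: the factor $\sum_k k^2\|J_k\Psi\|^2$ is genuinely of order $n^3$, so one must verify that the coefficient it is multiplied by is of order $c$ (it is, essentially, the quantity responsible for the central-charge shift of the Fock realization), so that no spurious fourth power of $n+h$ is produced. A subsidiary bookkeeping point is the alternative $\eta\in\RR$ versus $\eta\in i\RR$ of the preceding Lemma --- the unitary versus non-unitary Fock realizations --- which changes only signs: in both cases $|\k\mp ik\eta|$ and $|\k^2+k^2\eta^2|$ are bounded uniformly by $|\k|+k|\eta|$ and $\k^2+k^2|\eta|^2$, which is all the estimates above actually use.
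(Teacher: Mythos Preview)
Your approach is correct and follows essentially the same route as the paper. Both proofs pull the problem back to $V_0^\uone$ via the preceding Lemma, reduce to bounding $\langle\Psi,A\Psi\rangle$ for $\Psi\in(V_0^\uone)_n$, and then control the three pieces with Lemma~\ref{lm:Lk}, the identity $\sum_{k>0}\|J_k\Psi\|^2=n\|\Psi\|^2$, and the weighted estimate $\sum_{k>0}k^2\|J_k\Psi\|^2\le n^3\|\Psi\|^2$; the case split $\eta\in\RR$ versus $\eta\in i\RR$ is handled the same way.

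The only organizational difference is that the paper packages the expansion as a single sesquilinear expression
\[
\l\|\Psi\|^2=\big\langle(L_-^0+\k J_-+i\bar\eta J_-')\Psi,\ (L_-^0+\k J_-+i\eta J_-')\Psi\big\rangle,
\]
with $L_-^0=\sum_{k\ge 0}L_k^{0,0}$ and $J_-'=\sum_{k>0}kJ_k$, and then applies Cauchy--Schwarz together with the triangle inequality to get $|\l|\le(\|L_-^0\Psi\|+|\k|\|J_-\Psi\|+|\eta|\|J_-'\Psi\|)^2/\|\Psi\|^2$ in one stroke, whereas you expand term by term. Note also that since the statement already assumes $\Psi$ is an eigenvector of $A$, you can simply compute $\langle\Psi,A\Psi\rangle=\l\|\Psi\|^2$ for that vector; the numerical-range detour is valid but unnecessary.
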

    \begin{proof}
     As the statement is purely algebraic and does not depend on the scalar product, we can use
     the representation $\{L^{\k, \eta}_n\}$ to prove the thesis.
     The scalar product $\<\cdot,\cdot\>$ below is the one of the $\uone$-current algebra,
     and $(L^{\k,\eta}_n)^\dagger \neq L^{\k,\eta}_{-n}$ with respect to $\<\cdot, \cdot\>$ in general.
     However, $(L^{\k,\eta}_0)^\dagger = L^{\k,\eta}_0$ and hence is diagonalizable on each
     $\ker (L^{\k, \eta}_0 - (n+h)) = (V_0^\uone)_n$. Therefore, it is enough to work on
     eigenvectors of $L^{\k, \eta}_0$ in $(V_0^\uone)_n$. 
     
     Let $\Psi \in \ker (L^{\k, \eta}_0 - (n+h)) = (V_0^\uone)_n$
     and assume that $\sum_{k\ge 0} L^{\k, \eta}_{-k}L^{\k, \eta}_k \Psi = \l\Psi$.
     Set
     \[
      L_-^0 := \sum_{k \ge 0} L^{0,0}_k, \quad J_-' := \sum_{k > 0} kJ_k.
     \]
     which are operators on $V_0^\uone$.
     By noting that $(L_k^{0,0})^\dagger = (L_k)^\dagger = L_{-k} = L_{-k}^{0,0}$,
     it is straightforward to show that
     \[
      \l\|\Psi\|^2 = \left\< \Psi, \sum_{k\ge 0} L^{\k, \eta}_{-k}L^{\k, \eta}_k \Psi\right\>
      = \<(L_-^0 + \k J_- + i\bar \eta J_-')\Psi, (L_-^0 + \k J_- + i\eta J_-')\Psi\>,
     \]
     because, upon expanding the last expressions, terms with different $k$ are orthogonal.
     We may assume that $\Psi \neq 0$ and hence
     \[
      |\l| \le \frac{1}{\|\Psi\|^2} (\|L_-^0\Psi\| + \|\k J_-\Psi\| + \|\eta J_-'\Psi\|)^2.
     \]
     We know that $\|J_-\Psi\| = \sqrt n \|\Psi\|$.
     The vector $L^0_-\Psi$ is a sum of eigenvectors of $L_0$ which are orthogonal to each other,
     hence $\|L^0_-\Psi\|^2 = \|\sum_{k\ge 0} L_{-k}^{0,0}L_k^{0,0}\Psi\|^2 \le \frac94 n^3$ by Lemma \ref{lm:Lk}, or
     $\|L^0_-\Psi\| \le \frac32 n^\frac32\|\Psi\|$. As for $J_-'\Psi$,  we have
     \[
      \|J_-'\Psi\|^2 = \sum_{k=1}^n k^2 \|J_k\Psi\|^2 \le n^2 \sum_{k=1}^n \|J_k\Psi\|^2
      = n^2 \sum_{k=1}^n \<\Psi, J_{-k}J_k\Psi\>
      = n^2 \<\Psi, L_0^{0,0}\Psi\>
      = n^3\|\Psi\|^2.
     \]
     By assumption $c > 1$.
     In addition, if $\eta^2 < 0$, then $|\eta|^2 < \k^2 = \frac{c-1}{12}$
     and if $\eta^2 > 0$, then $|\eta|^2 \le \eta^2+\k^2 = 2h$.
     Altogether,
     \begin{align*}
      |\l| &\le \left(\frac32 n^\frac32 + |\k|n^\frac12 + |\eta| n^\frac32\right)^2 
      \le \left(\frac32 n^\frac32 + \left(\sqrt{\frac{c-1}{12}}+\sqrt{2h}\right)n^\frac12
      + \sqrt{\frac{c-1}{12}}n^\frac32\right)^2 \\
      &\le 5(c+1)(n+h)^3
     \end{align*}
     as claimed.
    \end{proof}

    \begin{corollary}\label{cr:t2bound}
     Let $\{L_n\}$ be a direct sum $V$ of irreducible lowest weight representations of the Virasoro algebra with
     $c\ge 1, h\ge 0$. Then, for every $\Psi \in V$,
     \[
      0 \le \<\Psi, :L^2:_0\Psi\> \le \<\Psi, 11(c+1)(\1+L_0^3)\Psi\>.
     \]
    \end{corollary}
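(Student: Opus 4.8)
The plan is to prove the two inequalities of the corollary separately. The lower bound $0\le :L^2:_0$ is purely algebraic: I would read it off directly from the identity \eqref{eq:t20}, since unitarity gives $\langle\Psi,L_{-k}L_k\Psi\rangle=\|L_k\Psi\|^2\ge 0$ and $2L_0+L_0^2\ge 0$ because $L_0$ has nonnegative spectrum, so every summand of $:L^2:_0=2L_0+L_0^2+2\sum_{k\ge 1}L_{-k}L_k$ is a nonnegative form.

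For the upper bound, the first move is to reduce to a single $L_0$-eigenspace inside an irreducible summand. Since $L_0$ leaves each irreducible summand invariant and $:L^2:_0$ commutes with $L_0$ and also preserves each summand, it suffices to bound $:L^2:_0$ on each eigenspace $V_{n+h}$ (with $n$ a nonnegative integer) of an irreducible lowest-weight summand of lowest weight $h\ge 0$ and central charge $c\ge 1$. Rewriting \eqref{eq:t20} in the form
\[
 :L^2:_0 \;=\; 2L_0 - L_0^2 + 2\sum_{k\ge 0}L_{-k}L_k ,
\]
its restriction to $V_{n+h}$ is $\big(2(n+h)-(n+h)^2\big)\1+2\sum_{k\ge 0}L_{-k}L_k|_{V_{n+h}}$; as $2x-x^2\le 1\le c+1$ for every $x\ge 0$, it is then enough to show
\[
 0 \;\le\; \sum_{k\ge 0}L_{-k}L_k\Big|_{V_{n+h}} \;\le\; 5(c+1)(n+h)^3\,\1 ,
\]
because this yields $:L^2:_0|_{V_{n+h}}\le(c+1)\big(1+10(n+h)^3\big)\1\le 11(c+1)\big(1+(n+h)^3\big)\1$, which is the corollary's bound on $V_{n+h}$; summing over eigenspaces gives it on all of $V$. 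Nonnegativity of $\sum_{k\ge 0}L_{-k}L_k$ is obvious, and the upper estimate is exactly Proposition~\ref{pr:t2bound} when $c>1$ and $h>0$.

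It remains to reach the boundary values $c=1$ or $h=0$, and here I would argue by continuity in $(c,h)$. Realizing $V_{n+h}$ as a quotient of the level-$n$ subspace of the Verma module $M(c,h)$, in a fixed Poincar\'e--Birkhoff--Witt basis of that finite-dimensional subspace the Shapovalov form $S_{c,h}$ and the operator $\sum_{k\ge 0}L_{-k}L_k$ have matrix entries that are polynomials in $c$ and $h$; hence positive-semidefiniteness (as an operator on that subspace) of $5(c+1)(n+h)^3 S_{c,h}-\sum_{k\ge 0}L_{-k}L_k$, which holds on the dense set $\{c>1,\,h>0\}$ by Proposition~\ref{pr:t2bound}, persists on its closure $\{c\ge 1,\,h\ge 0\}$. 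Since $\sum_{k\ge 0}L_{-k}L_k$ preserves every Virasoro submodule of $M(c,h)$ --- each $L_{-k}L_k$ lies in the universal enveloping algebra and the sum is finite on every vector --- in particular it preserves the radical of $S_{c,h}$, so the inequality descends to the irreducible quotient with its induced positive-definite inner product. This last degeneration step is the one subtle point: one must verify that $\sum_{k\ge 0}L_{-k}L_k$ is well defined on an arbitrary lowest-weight module, commutes with $L_0$ (so that it is diagonalized by the eigenspaces $V_{n+h}$), and descends to irreducible quotients compatibly with the induced scalar products --- routine Virasoro bookkeeping, but it needs care. Everything else, in particular the elementary inequalities $2x-x^2\le 1$ and $1+10y\le 11(1+y)$, is immediate. (In the case $c=1$, $h=0$ one may invoke Lemma~\ref{lm:Lk} instead of the limiting argument.)
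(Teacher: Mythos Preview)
Your proof is correct and follows essentially the same route as the paper's: positivity from \eqref{eq:t20}, the rewriting $:L^2:_0=2L_0-L_0^2+2\sum_{k\ge 0}L_{-k}L_k$ together with $2x-x^2\le 1$, the operator bound $\sum_{k\ge 0}L_{-k}L_k\le 5(c+1)L_0^3$ from Proposition~\ref{pr:t2bound} for $c>1,\,h>0$, and the polynomial-in-$(c,h)$ continuity argument on Verma modules to reach the boundary. You are in fact more explicit than the paper about the descent from the Verma module to the irreducible quotient (the paper leaves this implicit), and your notational shorthand ``$5(c+1)(n+h)^3 S_{c,h}-\sum_{k\ge 0}L_{-k}L_k$'' should be read as the Hermitian matrix $5(c+1)(n+h)^3 S_{c,h}-S_{c,h}A_{c,h}$, where $A_{c,h}$ is the matrix of $\sum_{k\ge 0}L_{-k}L_k$ in the PBW basis; with that interpretation the argument goes through.
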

    \begin{proof}
     It is enough to prove this for irreducible components. Furthermore, we first assume that
     $c > 1, h > 0$.
     
     By \eqref{eq:t20}, $:L^2:_0$ is a nonnegative operator.
     Moreover, as $\max\{2x-x^2: x\in \RR\} = 1$, we find
     (note that the sum below runs from $k=0$ while it runs from $k=1$ in \eqref{eq:t20})
     \[
      0 \le :L^2:_0 \le \1 + 2\sum_{k\ge 0} L_{-k}L_k.
     \]
     The sum $\sum_{k\ge 0} L_{-k}L_k$ leaves invariant each eigenspace of $L_0$,
     which are finite dimensional, therefore, $\sum_{k\ge 0} L_{-k}L_k$ is a direct sum
     of positive matrices. Proposition \ref{pr:t2bound} gives an estimate
     on a simultaneous eigenvector $\Psi$ of $L_0$ and $\sum_{k\ge 0} L_{-k}L_k$, hence
     we have $0 \le \sum_{k\ge 0} L_{-k}L_k \le 11(c+1)L_0^3$.
     This establishes the claim for $c> 1, h>0$.
     
     As for $c \ge 1, h \ge 0$, we take the Verma module $V_{c,h}$.
     These Verma modules can be identified as the space generated by $L_{-n_1}\cdots L_{-n_k}\Omega$
     even for different values of $c,h$.
     For any fixed $\Psi \in V_{c,h}$,
     $\<\Psi, (11(c+1) (\1+L_0^3)-:L^2:_0 )\Psi\>$ is a polynomial in $c,h$
     (we consider $\Psi$ as a fixed vector in the above single vector space, while changing $c,h$).
     As we have proven the non-negativity of this polynomial for $c>1, h>0$,
     the claim for $c\ge1, h\ge0$ follows by continuity.
    \end{proof}

    \subsection{Strong locality of the \texorpdfstring{$\cW_3$}{W3}-algebra}
    \label{sec:strongW}
    
    Let $W(z), L(z)$ be a lowest weight representation of the $\cW_3$-algebra with
    $c \ge 1, h = 0, w=0$ and assume that it is unitary.
    We fix $k \in \NN_+$. From the commutation relations \eqref{eq:comm},
    $[W_{-k}, W_k]$ can be bounded by $C(L_0+\1)^3$
    ($C$ depends on $k$, but this is not important here):
    the worst term is $\Lambda_0 = :L^2:_0 - \frac95 L_0$, whose bound has been obtained in Corollary \ref{cr:t2bound}
    (here the condition $c\ge 1$ is necessary),
    and the rest is a linear combination of $L_0$ and $c$.
    From Propositions \ref{pr:component}, \ref{pr:canonicalbound} and \ref{pr:improving} we infer that
    $W_k$ is bounded by $C'(L_0+\1)^2$.
    As the conformal dimension of $W(z)$ is $3$, this is the optimal global energy bound with $d=3, s=2$.
    By Proposition \ref{pr:globaltolocal}, we obtain a local energy bound.

    As we observed in \cite{CTW22}, they generate a simple unitary VOA
    $\cW_{3,c}$ (see \cite[Section 4.3]{CKLW18} for a review of unitary VOA and our conventions, and remarks before Lemma \ref{lm:localfield}).
    Now Theorem \ref{th:localnet} applies directly to them:
    \begin{corollary}\label{cr:w3net}
     For $c \ge 1$ such that the VOA $\cW_{3,c}$ is unitary,
     it is strongly local.
    \end{corollary}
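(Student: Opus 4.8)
The plan is to verify the hypotheses of Theorem \ref{th:localnet} for the generating fields of $\cW_{3,c}$ and then apply it. By \cite{CTW22}, for $c$ as in the statement the modes $\{L_n\},\{W_n\}$ generate a simple unitary VOA $\cW_{3,c}$, with $L(z)$ the stress-energy (Virasoro) field and $W(z)$ a hermitian primary field of conformal dimension $d=3$; hermiticity $(W_n)^\dagger=W_{-n}$ is exactly the unitarity statement of \cite{CTW22}, and relative locality of $\{L(z),W(z)\}$ is automatic in a VOA. Hence the only substantive point is that $W(z)$ satisfies an optimal global energy bound, equivalently (by Proposition \ref{pr:globaltolocal}) a local energy bound with $s=d-1=2$.

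To obtain this bound I would fix an integer $k\ge 1$ and estimate $[W_{-k},W_k]$ from the defining relations \eqref{eq:comm}: modulo scalars and a multiple of $L_0$, the right-hand side is governed by $\Lambda_0=\,:L^2:_0-\tfrac{9}{5}L_0$, and Corollary \ref{cr:t2bound} --- this is where the hypothesis $c\ge 1$ enters --- gives $0\le\,:L^2:_0\le 11(c+1)(\1+L_0^3)$, so that $\|[W_{-k},W_k](L_0+\1)^{-3}\|<\infty$. Proposition \ref{pr:improving} with $\beta=3$ then yields $\|W_k(L_0+\1)^{-2}\|<\infty$, and Proposition \ref{pr:component} with $s=2$ upgrades this to polynomial energy bounds for $W$ together with boundedness of $W(f)(L_0+\1)^{-2}$ for every $f\in C^\infty(S^1,\CC)$; in particular $\|W_0\Psi\|\le C\|(L_0+\1)^2\Psi\|$ for all $\Psi\in V$, which is optimal since $2=d-1$.

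With the optimal global energy bound in hand, Proposition \ref{pr:globaltolocal} provides the local energy bound \eqref{eq:leb} for $W$ with $s=d-1$, and Theorem \ref{th:localnet} applies verbatim: $\cW_{3,c}$ is a simple unitary VOA generated by the Virasoro field $L(z)$ and the hermitian primary field $W(z)$ satisfying an optimal global energy bound, hence it is strongly local. I do not expect a genuine obstacle at this point: the sole nontrivial ingredient is the cubic estimate on $:L^2:_0$, which has already been established in Section \ref{sec:T^2} (Corollary \ref{cr:t2bound}); the remainder is a bookkeeping chain through Propositions \ref{pr:component}, \ref{pr:improving}, \ref{pr:globaltolocal} and an appeal to Theorem \ref{th:localnet}.
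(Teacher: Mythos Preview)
Your proposal is correct and follows essentially the same route as the paper: bound $[W_{-k},W_k]$ by $C(L_0+\1)^3$ via the commutation relations \eqref{eq:comm} and Corollary \ref{cr:t2bound} (this is where $c\ge 1$ is used), apply Proposition \ref{pr:improving} with $\beta=3$ to get $\|W_k(L_0+\1)^{-2}\|<\infty$, invoke Proposition \ref{pr:component} to obtain the optimal global energy bound with $s=d-1=2$, and then conclude by Proposition \ref{pr:globaltolocal} and Theorem \ref{th:localnet}. The only cosmetic difference is that the paper records the VOA structure from \cite{CTW22} after establishing the energy bound rather than before.
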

    While strong locality for discrete series ($c < 2$) is known by the coset construction \cite{ACL19}, \cite[Section 5.1]{Tener19-2}, and $c=2$ is a subnet of the tensor product of the $\uone$-current net,
    in the continuous region $c > 2$ strong locality is settled for the first time by local energy bound. Note that the that our proof of strong locality works also for the $c$ in the discrete series with $c\geq 1$ and, except from the trivial case $c=0$, there is only one value of $c$ in the discrete series with $c<1$ namely $c=2(1- \frac35)=0.8$. In the latter case our proof does not work although strong locality follows from the coset realization.   
    
    We do not know whether $W(g)$ is essentially self-adjoint on $C^\infty(L_0)$
    for $g$, not of the form $f^{d-1} = f^2$ where $f$ is a nonnegative function.

    \subsection{Some properties of the \texorpdfstring{$\cW_3$}{W3}-nets with \texorpdfstring{$c > 2$}{c>2}}
    \label{some}
    \paragraph{Embedding into the free field net on $S^1 \setminus \{-1\}$.}
    
    In \cite[Section 3.2 (19)]{CTW22} we constructed a representation of the $\cW_3$-algebra
    on $V_0^\uone\otimes V_0^\uone$. The fields $\tilde L(z), \tilde W(z)$ are linear combinations of certain normal products of
    two currents and a function $\rho(z) = -i\frac{z-1}{z+1}$.
    The only singularity of $\rho(z)$ is at $z=-1$, therefore, it is possible to smear $\tilde L(z), \tilde W(z)$
    with functions whose supports do not contain $-1$, and obtain operators affiliated with
    the local algebras of the tensor product net $\A_{\uone}\otimes \A_{\uone}$
    of two $\uone$-current nets. They generate a translation-dilation covariant subnet $\mathcal{B}$ of $\A_{\uone}\otimes \A_{\uone}$.
    Let $P_\mathcal{B}$ be the orthogonal projection onto the subspace generated by $\mathcal{B}$,
    Then $P_\mathcal{B}$ commutes any element of the subnet $\mathcal{B}$, and in particular it commutes (strongly) with $\tilde L(f), \tilde W(g)$.
    Furthermore, if the supports of $f,g$ are disjoint, these operators commute strongly.
   
    
    Note that $P_\mathcal{B}\mathcal{H}_{\uone\times\uone}$ contains all the vectors generated from $\Omega$ by $\tilde L(f), \tilde W(g)$,
    where $-1 \notin \supp g, \supp f$. These vectors can be identified with those in the completion $\mathcal{H}_{\cW_3,c}$ of
    the vacuum module of the $\cW_3$-algebra, generated by $L(f), W(g)$ from the vacuum vector.
    By Corollary \ref{cr:w3net}, we have a conformal net $\A_{\cW_3, c}$ on $\mathcal{H}_{\cW_3,c}$ generated by $L(f), W(g)$,
    therefore, by the Reeh-Schlieder property, the subspace $\mathscr{D}_{\cW_3,c}$ generated from the vacuum by $L(f), W(g)$ with $-1\notin \supp f, \supp g$
    is dense in $\mathcal{H}_{\cW_3,c}$, and we can identify $\mathcal{H}_{\cW_3, c}$ as a closed subspace of
    $P_{\mathcal{B}}\mathcal{H}_{\uone\times \uone}$. Moreover,  $\mathscr{D}_{\cW_3,c}$ is a core for all the operators
    $L(f), W(g)$ with $-1\notin \supp f, \supp g$, e.g. by \cite[Lemma 7.1]{CKLW18}, cf. also \cite[Eq. (4.90)]{BS90}.            
     
     Let $P_{\cW_3,c}$ be the projection onto this space. 
    Let $g = \tilde g^2$ for some $\tilde g \ge 0$. Then, $W(g)$ is a self-adjoint operator on  $\mathcal{H}_{\cW_3,c}$  by
    Theorem \ref{th:strong}  and Section \ref{sec:strongW}.     
    Let us consider the restriction $\hat{W}(g)$ to  $\mathcal{H}_{\cW_3,c}$ of the operator $P_{\cW_3,c} \tilde W(g) P_{\cW_3,c}$, namely the operator on $\mathcal{H}_{\cW_3,c}$ with dense domain  $\mathcal{H}_{\cW_3,c} \cap \dom(\tilde{W}(g))$ acting as 
    $P_{\cW_3,c} \tilde{W}(g)$
    on this domain. $\hat{W}(g)$ is a symmetric operator extending the self-adjoint operator $W(g)$. 
    Hence, $\hat{W}(g)=W(g)$ and it follows that $P_{\cW_3,c} \tilde W(g) P_{\cW_3,c}$ is a self-adjoint operator on          
    $\mathcal{H}_{\uone\times \uone}$. Moreover, if $\Psi \in \mathcal{H}_{\cW_3,c} \cap \dom(\tilde{W}(g))$   then 
    there is a sequance $\Psi_n \in  \mathscr{D}_{\cW_3,c}$ such that $\Psi_n \to \Psi$ and 
    $\tilde{W}(g)\Psi_n =$ $P_{\cW_3,c} \tilde{W}(g)\Psi_n \to P_{\cW_3,c} \tilde{W}(g)\Psi $
    so that $\tilde{W}(g)\Psi = P_{\cW_3,c} \tilde{W}(g)\Psi$. As a consequence we have the operator equality  
    $\tilde{W}(g)P_{\cW_3,c} = P_{\cW_3,c} \tilde{W}(g)P_{\cW_3,c}$  and, the latter being self-adjoint, we can conclude that  $\tilde{W}(g)P_{\cW_3,c}  = ( \tilde{W}(g)P_{\cW_3,c}  )^*$. But, since $\tilde{W}(g)$ is symmetric,   
    $( \tilde{W}(g)P_{\cW_3,c}  )^* $,  is an extension of $P_{\cW_3,c}\tilde{W}(g)$ i.e. $P_{\cW_3,c}$ commutes with
    $\tilde{W}(g)$. Similarly, $\tilde{L}(f)$ and $P_{\cW_3,c}$ commute, c.f. \cite[Section 4]{BS90}. In the latter case the 
    self-adjointness of  $L(f)$ for a real $f$ follows from the linear energy bounds. Therefore,
    $P_{\cW_3,c} = P_{\mathcal{B}}$. In this way, the net $\mathcal{A}_{\cW_3,c}$ can be
    embedded in $\A_{\uone}\otimes \A_{\uone}$ as a translation-dilation covariant subnet.    
    
    \begin{remark} The above argument gives an an alternative proof of the strong locality of 
    ${\cW_3,c}$ for $c\geq 2$. Note however that the argument is only partially independent from Theorem \ref{th:strong} since the self-adjointness of $W(\tilde{g}^2)$ was used in a crucial way.      
      
    \end{remark}

    \paragraph{Failure of strong additivity.}
    Let $c > 2$.
    We can take localized Weyl operators, e.g.\! $e^{iJ(f)}\otimes \1$ in $\A_{\uone}\otimes \A_{\uone}$,
    where $J$ is the $\uone$-current.
    By the arguments of \cite[Section 4]{BS90}, the operator $P(e^{iJ(f)}\otimes \1)P$ belongs to
    the dual net of $\A_{\cW_3, c}$, but not $\A_{\cW_3, c}$ itself. This shows that
    $\A_{\cW_3, c}$ does not satisfy strong additivity, just as the Virasoro nets with $c > 1$.

    \paragraph{Infinite $\mu$-index.}
    While $\A_{\cW_3,c}$ does not satisfy strong additivity, it is conformally covariant,
    hence it satisfies the split property \cite[Theorem 5.4]{MTW18}.
    By \cite[Theorem 5.3]{LX04}, it cannot have finite $\mu$-index in the sense of \cite{KLM01}.

    \paragraph{Some locally normal representations.}
    Again, as with the Virasoro nets with $c > 1$, we can construct some locally normal representations
    using the automorphisms of the $\uone$-currents. Indeed, we have constructed some
    lowest weight representations which are manifestly unitary \cite[Section 3.4]{CTW22}.
    These representations are obtained by composing certain locally normal automorphisms
    of $\A_{\uone}\otimes \A_{\uone}$ (except $-1$) and a representations of the form \cite{BMT88}, cf. 
    \cite[Section 4]{BS90}.
    This establishes local normality of the lowest weight representations with
     \[
     h\geq \frac{c-2}{24},\;\; |w|\leq \sqrt{\frac{8}{198+45c}}
     \left(2h-\frac{c-2}{12}\right)^{\frac{3}{2}}\;\;\;\;(h,w\in\RR).
     \]
    For other representations, analytic continuation might be useful \cite{Weiner17}.

    \subsection*{Acknowledgements}
We would like to thank the referees for helpful comments.
SC and MW are supported in part by the ERC advanced grant 669240 QUEST ``Quantum Algebraic Structures and Models''.
SC is also supported by GNAMPA-INDAM.
YT was supported until February 2020 by Programma per giovani ricercatori, anno 2014 ``Rita Levi Montalcini''
of the Italian Ministry of Education, University and Research.
MW is supported also by the National Research,
Development and Innovation Office of Hungary (NRDI) via the research grant K124152, KH129601 and K132097
and the Bolyai J\'anos Fellowship of the Hungarian Academy of Sciences, the \'UNKP-19-4 New National Excellence Program
of the Ministry for Innovation and Technology.

SC and YT acknowledge the MIUR Excellence Department Project awarded to
the Department of Mathematics, University of Rome ``Tor Vergata'' CUP E83C18000100006 and the University of
Rome ``Tor Vergata'' funding scheme ``Beyond Borders'' CUP E84I19002200005.

We are also grateful to Mathematisches Forschungsinstitut Oberwolfach and Simons Center for Geometry and Physics,
where parts of this work have been done.

{\small
\newcommand{\etalchar}[1]{$^{#1}$}
\def\cprime{$'$} \def\polhk#1{\setbox0=\hbox{#1}{\ooalign{\hidewidth
  \lower1.5ex\hbox{`}\hidewidth\crcr\unhbox0}}} \def\cprime{$'$}

}

\end{document}